\documentclass[pra,aps,nopacs,onecolumn,twoside,superscriptaddress]{revtex4}



\usepackage{multirow, makecell}
\usepackage{amsmath,amsfonts,amssymb,caption,color,epsfig,graphics,graphicx,hyperref,latexsym,mathrsfs,revsymb,theorem,url,verbatim,epstopdf,enumerate}
\usepackage{amsmath,amsfonts,amssymb,caption,hyperref,color,epsfig,graphics,graphicx,latexsym,mathrsfs,revsymb,theorem,url,verbatim,epstopdf,cleveref}
\usepackage{fontenc}
\usepackage{cases}
\usepackage[T1]{fontenc}
\hypersetup{colorlinks,linkcolor={blue},citecolor={blue},urlcolor={red}}
\usepackage{lmodern} 
\usepackage{tipa}
\usepackage{textcomp} 

\newtheorem{definition}{Definition}
\newtheorem{proposition}[definition]{Proposition}
\newtheorem{lemma}[definition]{Lemma}

\newtheorem{theorem}[definition]{Theorem}
\newtheorem{corollary}[definition]{Corollary}
\newtheorem{conjecture}[definition]{Conjecture}

\newtheorem{remark}[definition]{Remark}
\newtheorem{example}[definition]{Example}
\newtheorem{question}[definition]{Question}
\newtheorem{memo}[definition]{Memo}


\def\squareforqed{\hbox{\rlap{$\sqcap$}$\sqcup$}}
\def\qed{\ifmmode\squareforqed\else{\unskip\nobreak\hfil
		\penalty50\hskip1em\null\nobreak\hfil\squareforqed
		\parfillskip=0pt\finalhyphendemerits=0\endgraf}\fi}
\def\endenv{\ifmmode\;\else{\unskip\nobreak\hfil
		\penalty50\hskip1em\null\nobreak\hfil\;
		\parfillskip=0pt\finalhyphendemerits=0\endgraf}\fi}
\newenvironment{proof}{\noindent \textbf{{Proof.~} }}{\qed}
\def\Dbar{\leavevmode\lower.6ex\hbox to 0pt
	{\hskip-.23ex\accent"16\hss}D}
\makeatletter
\def\url@leostyle{%
	\@ifundefined{selectfont}{\def\UrlFont{\sf}}{\def\UrlFont{\small\ttfamily}}}
\makeatother
\urlstyle{leo}

\def\bcj{\begin{conjecture}}
	\def\ecj{\end{conjecture}}
\def\bcr{\begin{corollary}}
	\def\ecr{\end{corollary}}
\def\bd{\begin{definition}}
	\def\ed{\end{definition}}
\def\bea{\begin{eqnarray}}
	\def\eea{\end{eqnarray}}
\def\bem{\begin{enumerate}}
	\def\eem{\end{enumerate}}
\def\bex{\begin{example}}
	\def\eex{\end{example}}
\def\bim{\begin{itemize}}
	\def\eim{\end{itemize}}
\def\bl{\begin{lemma}}
	\def\el{\end{lemma}}
\def\bma{\begin{bmatrix}}
	\def\ema{\end{bmatrix}}
\def\bpf{\begin{proof}}
	\def\epf{\end{proof}}
\def\bpp{\begin{proposition}}
	\def\epp{\end{proposition}}
\def\bqu{\begin{question}}
	\def\equ{\end{question}}
\def\br{\begin{remark}}
	\def\er{\end{remark}}
\def\bt{\begin{theorem}}
	\def\et{\end{theorem}}
\def\bmm{\begin{memo}}
	\def\emm{\end{memo}}

\def\btb{\begin{tabular}}
	\def\etb{\end{tabular}}

	\newcommand{\nc}{\newcommand}
	
	
	\def\a{\alpha}
	\def\b{\beta}
	\def\g{\gamma}
	
	\def\e{\epsilon}
	
	\def\z{\zeta}

	\def\l{\lambda}

	\def\r{\rho}
	\def\s{\sigma}

	\def\ps{\psi}

	\def\G{\Gamma}

	\nc{\bbA}{\mathbb{A}} \nc{\bbB}{\mathbb{B}} \nc{\bbC}{\mathbb{C}}
	\nc{\bbD}{\mathbb{D}} \nc{\bbE}{\mathbb{E}} \nc{\bbF}{\mathbb{F}}
	\nc{\bbG}{\mathbb{G}} \nc{\bbH}{\mathbb{H}} \nc{\bbI}{\mathbb{I}}
	\nc{\bbJ}{\mathbb{J}} \nc{\bbK}{\mathbb{K}} \nc{\bbL}{\mathbb{L}}
	\nc{\bbM}{\mathbb{M}} \nc{\bbN}{\mathbb{N}} \nc{\bbO}{\mathbb{O}}
	\nc{\bbP}{\mathbb{P}} \nc{\bbQ}{\mathbb{Q}} \nc{\bbR}{\mathbb{R}}
	\nc{\bbS}{\mathbb{S}} \nc{\bbT}{\mathbb{T}} \nc{\bbU}{\mathbb{U}}
	\nc{\bbV}{\mathbb{V}} \nc{\bbW}{\mathbb{W}} \nc{\bbX}{\mathbb{X}}
	\nc{\bbZ}{\mathbb{Z}}
	
	
	\nc{\bA}{{\bf A}} \nc{\bB}{{\bf B}} \nc{\bC}{{\bf C}}
	\nc{\bD}{{\bf D}} \nc{\bE}{{\bf E}} \nc{\bF}{{\bf F}}
	\nc{\bG}{{\bf G}} \nc{\bH}{{\bf H}} \nc{\bI}{{\bf I}}
	\nc{\bJ}{{\bf J}} \nc{\bK}{{\bf K}} \nc{\bL}{{\bf L}}
	\nc{\bM}{{\bf M}} \nc{\bN}{{\bf N}} \nc{\bO}{{\bf O}}
	\nc{\bP}{{\bf P}} \nc{\bQ}{{\bf Q}} \nc{\bR}{{\bf R}}
	\nc{\bS}{{\bf S}} \nc{\bT}{{\bf T}} \nc{\bU}{{\bf U}}
	\nc{\bV}{{\bf V}} \nc{\bW}{{\bf W}} \nc{\bX}{{\bf X}}
	\nc{\bZ}{{\bf Z}}
	
	
	\nc{\as}{{\cal AS}}
	\nc{\app}{{\cal AP}}
	\nc{\ar}{{\cal AR}}
	\nc{\bp}{{\cal BP}}
	\nc{\dbp}{{\cal DBP}}
\nc{\ew}{{\cal EW}}
\nc{\dew}{{\cal DEW}}
\nc{\ndew}{{\cal NDEW}}
\nc{\conv}{{\text{Conv}}}

	\nc{\cA}{{\cal A}} \nc{\cB}{{\cal B}} \nc{\cC}{{\cal C}}
	\nc{\cD}{{\cal D}} \nc{\cE}{{\cal E}} \nc{\cF}{{\cal F}}
	\nc{\cG}{{\cal G}} \nc{\cH}{{\cal H}} \nc{\cI}{{\cal I}}
	\nc{\cJ}{{\cal J}} \nc{\cK}{{\cal K}} \nc{\cL}{{\cal L}}
	\nc{\cM}{{\cal M}} \nc{\cN}{{\cal N}} \nc{\cO}{{\cal O}}
	\nc{\cP}{{\cal P}} \nc{\cQ}{{\cal Q}} \nc{\cR}{{\cal R}}
	\nc{\cS}{{\cal S}} \nc{\cT}{{\cal T}} \nc{\cU}{{\cal U}}
	\nc{\cV}{{\cal V}} \nc{\cW}{{\cal W}} \nc{\cX}{{\cal X}}
	\nc{\cZ}{{\cal Z}}
	
	\nc{\cpp}{{\cal PP}}
	
	
	\nc{\hA}{{\hat{A}}} \nc{\hB}{{\hat{B}}} \nc{\hC}{{\hat{C}}}
	\nc{\hD}{{\hat{D}}} \nc{\hE}{{\hat{E}}} \nc{\hF}{{\hat{F}}}
	\nc{\hG}{{\hat{G}}} \nc{\hH}{{\hat{H}}} \nc{\hI}{{\hat{I}}}
	\nc{\hJ}{{\hat{J}}} \nc{\hK}{{\hat{K}}} \nc{\hL}{{\hat{L}}}
	\nc{\hM}{{\hat{M}}} \nc{\hN}{{\hat{N}}} \nc{\hO}{{\hat{O}}}
	\nc{\hP}{{\hat{P}}} \nc{\hR}{{\hat{R}}} \nc{\hS}{{\hat{S}}}
	\nc{\hT}{{\hat{T}}} \nc{\hU}{{\hat{U}}} \nc{\hV}{{\hat{V}}}
	\nc{\hW}{{\hat{W}}} \nc{\hX}{{\hat{X}}} \nc{\hZ}{{\hat{Z}}}
	
	\nc{\hn}{{\hat{n}}}
	
	

	
	
	
	
	

	
	
	

	

	
	
	
	

	



	
	
	\def\diag{\mathop{\rm diag}}
	\def\dim{\mathop{\rm Dim}}



	\def\max{\mathop{\rm max}}
	\def\min{\mathop{\rm min}}

	


	
	\def\tr{\mathop{\rm Tr}}

	\def\span{\mathop{\rm Span}}



	\def\dg{\dagger}

	\def\ra{\rightarrow}

	\newcommand{\bra}[1]{\langle#1|}
	\newcommand{\ket}[1]{|#1\rangle}
	\newcommand{\proj}[1]{| #1\rangle\!\langle #1 |}

	\newcommand{\norm}[1]{\lVert#1\rVert}


	


	
	
	
	
	
	
	
	
	
	
	
	
	
	\def\Dbar{\leavevmode\lower.6ex\hbox to 0pt
		{\hskip-.23ex\accent"16\hss}D}

\begin{document}

\title{Spectral characterizations of entanglement witnesses}

\date{\today}

\author{Zhiwei Song}\email[]{zhiweisong@buaa.edu.cn}
\affiliation{LMIB(Beihang University), Ministry of education, and School of Mathematical Sciences, Beihang University, Beijing 100191, China}

\author{Lin Chen}\email[]{linchen@buaa.edu.cn (corresponding author)}
\affiliation{LMIB(Beihang University), Ministry of education, and School of Mathematical Sciences, Beihang University, Beijing 100191, China}


\begin{abstract}
We present a systematic investigation of the spectral properties of entanglement witnesses (EWs). Specifically, we analyze the infimum and supremum of the largest eigenvalue, the smallest eigenvalue, the negativity (defined as the absolute value of the sum of negative eigenvalues), and the squared Frobenius norm of a unit-trace (normalized) entanglement witness, along with the conditions under which these values are attained. Our study provides distinct characterizations for decomposable (DEWs) and nondecomposable entanglement witnesses (NDEWs). While these two classes share many spectral similarities, we reveal a fundamental divergence by proving that the infimum of the smallest eigenvalue
can be attained by DEWs, yet remains strictly unattainable for all NDEWs. 
We apply the results to provide necessary conditions for an EW to possess a mirrored EW.
Furthermore, we demonstrate the superior detection capability of NDEWs by proving that any non-positive-transpose (NPT) state beyond the two-qubit and qubit-qutrit systems can be detected by an NDEW.
\end{abstract}

\maketitle

Keywords: entanglement witness, spectral property, detection capability


\section{Introduction}
\label{sec:int}
Characterization of entangled states is an important issue in
quantum information theory, from both a theoretical and an experimental perspective. In particular, it is of primary importance
to test whether a given quantum state is entangled \cite{horodecki2009quantum}.
For low-dimensional systems, the celebrated Peres–Horodecki criterion \cite{peres1996,horodecki2001separability} implies that a state within two-qubit or qubit-qutrit system is separable if and only if its partial transpose is positive, i.e., the state is PPT. However, for higher-dimensional systems, the condition is only necessary, meaning the existence of PPT entangled states. It turns out that determining the separability is an NP-hard problem \cite{gurvits2003classical}. Despite this, there are considerable efforts to distinguish entangled states from separable states, see for a review of entanglement detection \cite{guhne2009entanglement}. 

One of such powerful approaches is the so-called entanglement witnesses (EWs). 
 EWs are Hermitian operators that produce non-negative expectation values on all separable states. So if the expectation value on some state
is negative, then that state is guaranteed to be entangled.
This provides a tool to verify entanglement in experiments
since they are in principle measurable observables. In the past decades, EWs have been studied from various points of view, including connections with Bell inequalities \cite{hyllus2005relations}, optimization of witnesses \cite{lewenstein2000optimization1}, the nonlinear extensions \cite{guhne2006nonlinear} and so on. We refer readers to \cite{Chru2014Entanglement} for a recent review on EWs. Nevertheless, not every EW is useful for detecting a given entangled state \cite{horodecki1997}. The decomposable EWs cannot detect PPT entangled state (PPTES) and, therefore, 
nondecomposable EWs  have to be constructed to detect
bound entangled states. The PPT entangled states have been studied in terms of Schmidt number \cite{sanpera2001schmidt}, unextendible product basis \cite{sollid2011unextendible}, the total characterization of two-qutrit rank-four PPT entangled states \cite{chen2011description,Chen2012Equivalence1}, entanglement distillation and \cite{horodecki1997inseparable,chen2011distillability2}.
Hence, understanding the mathematical structure of EWs has become a more important step for the above theoretical problems and applications.

The spectral properties of EWs have been extensively studied in the literature. Several classes of decomposable EWs have been constructed based on a specific spectral condition \cite{chruscinski2009spectral1,chruscinski2009spectral}.
It is known that an EW $W\in \cM_m(\bbC)\otimes \cM_n(\bbC)$ has at most $(m-1)(n-1)$ negative eigenvalues \cite{sarbicki2008spectral}, with this bound being attainable by decomposable EWs \cite{johnston2013non}. Further, the inertia of EWs has been partially studied for NPT states within qubit-qudit and two-qutrit systems \cite{rana2013negative,shen2020inertias, feng2024inertia,liang2024inertia}. Additionally, eigenvalue inequalities for EWs have been explored, particularly concerning the lower bound of the ratio 
$\frac{\l_{\text{min}}(W)}{\l_{\text{max}}(W)}$ \cite{johnston2010family,johnston2012norms}.
In \cite{johnston2018inverse}, by applying the established results of absolutely separable and PPT states, a necessary and sufficient spectral condition was established for qubit-qudit EWs, with necessary conditions further extended to higher-dimensional systems.

In this paper, we shall investigate more spectral properties of EWs in arbitrary-dimensional systems. Specifically, we analyze the supremum and infimum of $\tr(W^2)$, $\l_{\text{max}}(W)$, $\l_{\text{min}}(W)$ and $\cN(W)$ for a normalized EW $W$. Another aspect of our work involves determining whether these theoretical bounds are attainable. 
Our characterizations are presented separately for decomposable entanglement witnesses (DEWs) and nondecomposable entanglement witnesses (NDEWs), and our main findings are summarized in Table \ref{ta:eigenvalues}.
For DEWs, we establish necessary and sufficient conditions for the attainment of these infima and suprema. Additionally, by introducing two new absolutely PPT states, we determine the infimum of the sum of the two and three smallest eigenvalues of a normalized DEW. Our analysis reveals that while many of the aforementioned properties are consistent between DEWs and NDEWs, there are some notable differences. In particular, we find that the infimum of $\l_{\text{min}}(W)$ can be attained by DEWs, but remains unattainable for NDEWs.

\begin{table}
	\begin{center}   
		\caption{Suprema and infima on the largest eigenvalue $\l_1$, smallest eigenvalue $\l_{mn}$, the negativity $\cN(W)$, and $\tr(W^2)$ in terms of trace-one (normalized) DEWs and NDEWs in $\cM_m(\bbC)\otimes \cM_n(\bbC)$ with $m\le n$. These results are contained in Theorems \ref{th:lambdaIN[-1/2,1)} and \ref{NDEW}. The question mark "?" denotes the unsolved cases.}  
		\label{ta:eigenvalues}
		\begin{tabular}{|c|c|c|c|c|}   
			\hline    
			& DEWs, sup, &DEWs, inf &NDEWs, sup & NDEWs, inf  \\      
			\hline    
			$\l_1$ 
			&
			\makecell[l]{
				$1$, not attainable} 
			& 
			\makecell[l]{
				${1\over mn-1}$, not attainable} 
			& 
			\makecell[l]{
				1, not attainable
					} 
			& 
			\makecell[l]{
				?, not attainable\\  
			} 
			\\   
			\hline    
			$\l_{mn}$ 
			& 
			\makecell[l]{
				$0$, not attainable \\  
				} 
			& 
			\makecell[l]{
				$-\frac{1}{2}$, attainable \\  
				} 
			& 
			\makecell[l]{
				0, not attainable} 
			& 
			\makecell[l]{
				$-\frac{1}{2}$, not attainable}
			\\    
			\hline   
			$\cN(W)$ 
			& 
			\makecell[l]{
				${m-1\over2}$, attainable} 
			& 
			\makecell[l]{
				$0$, not attainable} 
			& 
			\makecell[l]{
				$\frac{m-1}{2}$ for $m=2$ and $m=n\ge 3$, \\
				not attainable for $m=n=3$, \\? for other cases
			}  
			& 
			\makecell[l]{
				0, not attainable
			}
				\\    
			\hline   
			$\tr(W^2)$ 
			& 
			\makecell[l]{
				$1$, attainable} 
			& 
			\makecell[l]{
				$\frac{1}{mn-1}$, not attainable} 
			& 
			\makecell[l]{
				$1$, ?
			}  
			& 
			\makecell[l]{
				?, not attainable
			}
			\\
			\hline   
		\end{tabular}   
	\end{center}   
\end{table}

The rest of this paper is organized as follows. In Sec. \ref{sec:pre}, we introduce the basic knowledge and facts relevant to this paper. In Sec. \ref{sec:res=eigenvalues}, we characterize the spectral properties for DEWs and NDEWs in Theorem \ref{th:lambdaIN[-1/2,1)} and \ref{NDEW}, respectively. In Section \ref{Power}, we apply our results to provide some necessary conditions for an EW to possess a mirrored EW. We also show that NDEWs have a strong detection capability in Theorem \ref{NPTNDEW}.
Finally, we conclude in Sec. \ref{sec:con}.

\section{Preliminaries}
\label{sec:pre}
In this section, we first establish the basic notations used throughout this paper. We then review fundamental concepts related to the separability problem and entanglement witnesses (EWs). Finally, we introduce the notions of absolutely separable (AS) and positive partial transpose (PPT) states, which serve as key tools for characterizing the spectral properties of EWs.

\subsection{Basic notations and results}
Let  $\ket{i}\in \bbC^n$ denote the standard basis vector whose $i$-th component is one and all other entries are zero. Given a vector $\ket{a}$, we denote $\ket{a^*}$ as its complex conjugate and $\bra{a}$ as its conjugate transpose.
For a matrix $M$, we denote $\cR(M)$ and $\cK(M)$ as the range and kernel space of matrix $M$, respectively. For a Hermitian matrix $M$, we refer to $\l(M):=(\l_1(M),\cdots,\l_{n}(M))$ as the eigenvalue vector of $M$, arranged in non-increasing order. We shall take $\l$ as $\l(M)$ and $\l_j$ as $\l_j(M)$ when $M$ is clear from the context. The family of Schatten p-norm, is defined for $p\in [1,\infty]$ by $||X||_p:=(\tr(X^\dg X)^\frac{p}{2})^{\frac{1}{p}}$, with $||X||_1,||X||_2$ being the trace norm and Frobenius norm, respectively. We say that a matrix $M\ge 0$ if $M$ is positive semidefinite. We define  $\cN(M):=\frac{||M||_1-1}{2}$ as the absolute value of the sum of its negative eigenvalues. This definition guarantees that $\cN(M)$ is convex. 
Given two vectors $x,y\in \bbR^n$, we say that $y$ majorizes $x$, denote as $y\succ x$ (or $x\prec y$)  if 
$\sum_{i=1}^k x_i^{\downarrow}\le \sum_{i=1}^k y_i^{\downarrow}$, for any $k=1,\cdots,n-1$ and $\sum_{i=1}^n x_i^{\downarrow}=\sum_{i=1}^n y_i^{\downarrow}$ ($x^{\downarrow}$ denotes the non-increasing order of $x$).

The following lemma summarizes some basic eigenvalue inequalities for Hermitian matrices. We refer readers to \cite{horn2012matrix1} for more details.
\begin{lemma}
	\label{ineqe}
	Let $A,B$ be two order-$n$ Hermitian matrices. Then
	
	(i) $\l(A+B)\prec \l(A)+\l(B)$.

	(ii) $\l_i(A)+\l_n(B)\le \l_i(A+B), i=1,\cdots,n$
	with equality if and only if there is a nonzero vector $x$ such that $Ax=\l_i(A)x$, $Bx=\l_n(B)x$ and $(A+B)x=\l_i(A+B)x$. 
	
	(iii) $\sum_{i=1}^n (\l_i(A)-\l_i(B))^2\le ||B-A||_2^2$.
	
	(iv) $\tr(A\cdot B)\ge \sum_{i=1}^n \l_i(A)\cdot\l_{n-i+1}(B)$.
	
	(v) Let $H$ be an $n\times n$ Hermitian matrix  partitioned as $H=\bma A&&B\\B^*&&C\ema$, where $A$ is a principal submatrix of order-$m$ with $m\le n$. Then $\l_{k+n-m}(H)\le \l_k(A)\le \l_k(H)$ for any $k=1,\cdots,m$.
\end{lemma}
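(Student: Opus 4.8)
The plan is to recognize these five items as a standard toolkit of Hermitian eigenvalue inequalities, each following from one of three classical devices: the Courant–Fischer min–max theorem, which expresses $\l_i(M)$ both as $\max_V\min_{0\ne x\in V}\frac{x^\dg Mx}{x^\dg x}$ over $i$-dimensional subspaces $V$ and as $\min_V\max_{0\ne x\in V}\frac{x^\dg Mx}{x^\dg x}$ over $(n-i+1)$-dimensional subspaces; Ky Fan's extremal principle $\sum_{i=1}^k\l_i(M)=\max\{\tr(PMP):P=P^\dg=P^2,\ \rk P=k\}$; and the Birkhoff–von Neumann theorem together with the rearrangement inequality for optimizing a bilinear form against a doubly stochastic matrix. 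All of this is in \cite{horn2012matrix1}; the arguments run as follows.

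For (i), apply Ky Fan's principle to $A+B$: for each rank-$k$ projection $P$, $\tr(P(A+B)P)=\tr(PAP)+\tr(PBP)\le\sum_{i=1}^k\l_i(A)+\sum_{i=1}^k\l_i(B)$, and maximizing over $P$ gives $\sum_{i=1}^k\l_i(A+B)\le\sum_{i=1}^k\l_i(A)+\sum_{i=1}^k\l_i(B)$ for all $k$, while $\tr(A+B)=\tr A+\tr B$ supplies the equality at $k=n$; this is exactly $\l(A+B)\prec\l(A)+\l(B)$. For (ii), the bound $\l_i(A+B)\ge\l_i(A)+\l_n(B)$ is \emph{Weyl's inequality}: on an $i$-dimensional subspace where $A\ge\l_i(A)$ one has $A+B\ge\l_i(A)+\l_n(B)$ there, and the max–min form gives the inequality. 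For the equality clause, one direction is immediate: if $x\ne0$ satisfies $Ax=\l_i(A)x$ and $Bx=\l_n(B)x$, then $(A+B)x=(\l_i(A)+\l_n(B))x$, forcing $\l_i(A+B)=\l_i(A)+\l_n(B)$. For the converse, assume equality and intersect the span $S$ of the eigenvectors of $A$ with eigenvalue at least $\l_i(A)$ (of dimension at least $i$) with the span $T$ of the eigenvectors of $A+B$ with eigenvalue at most $\l_i(A+B)$ (of dimension at least $n-i+1$); these dimensions sum to at least $n+1>n$, so there is a nonzero $x\in S\cap T$, and then $(\l_i(A)+\l_n(B))x^\dg x\le x^\dg Ax+x^\dg Bx=x^\dg(A+B)x\le\l_i(A+B)x^\dg x=(\l_i(A)+\l_n(B))x^\dg x$ collapses every inequality, which pins $x$ to the $\l_i(A)$-eigenspace of $A$, to the minimal eigenspace of $B$, and hence to the $\l_i(A+B)$-eigenspace of $A+B$.

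For (iii) and (iv), diagonalize $A=U\diag(\l(A))U^\dg$ and $B=V\diag(\l(B))V^\dg$, set $W=U^\dg V$, and expand: $\tr(AB)=\sum_{i,j}\l_i(A)\l_j(B)\abs{W_{ij}}^2$ and, since $\tr(AB)$ is real, $\norm{B-A}_2^2=\norm{\l(A)}_2^2+\norm{\l(B)}_2^2-2\tr(AB)$. The matrix $S:=(\abs{W_{ij}}^2)$ is doubly stochastic, hence by Birkhoff–von Neumann a convex combination of permutation matrices, so the rearrangement inequality yields $\sum_i\l_i(A)\l_{n-i+1}(B)\le\sum_{i,j}\l_i(A)\l_j(B)S_{ij}\le\sum_i\l_i(A)\l_i(B)$. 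The left-hand bound applied to $\tr(AB)$ is (iv); the right-hand bound, substituted into the identity for $\norm{B-A}_2^2$ and compared with $\sum_i(\l_i(A)-\l_i(B))^2=\norm{\l(A)}_2^2+\norm{\l(B)}_2^2-2\sum_i\l_i(A)\l_i(B)$, is (iii). Finally, (v) is Cauchy interlacing: viewing $\bbC^m$ as the coordinate subspace that carries $A$, any $k$-dimensional subspace $V\sue\bbC^m$ is a $k$-dimensional subspace of $\bbC^n$ with $x^\dg Hx=x^\dg Ax$ for $x\in V$, so the max–min form gives $\l_k(A)\le\l_k(H)$; using instead the min–max form over $(m-k+1)$-dimensional subspaces of $\bbC^m$, which are also $(m-k+1)$-dimensional subspaces of $\bbC^n$, and recalling that $\l_{k+n-m}(H)$ is the minimum over \emph{all} such subspaces of $\bbC^n$ of $\max_x\frac{x^\dg Hx}{x^\dg x}$, gives $\l_{k+n-m}(H)\le\l_k(A)$.

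The genuinely routine parts are routine; the one step that needs care is the equality clause of (ii), where a possible multiplicity of $\l_i(A)$ means one must intersect the full eigenspace of $A$ for eigenvalues $\ge\l_i(A)$ — whose dimension can exceed $i$ — with the analogous subspace for $A+B$, and verify that the dimension count (at least $i+(n-i+1)=n+1$, hence strictly larger than $n$) still forces a nonzero common vector and that the collapse of the sandwiched inequalities genuinely localizes $x$ to the claimed eigenspaces. Beyond this bookkeeping I anticipate no obstacle.
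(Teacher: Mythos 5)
The paper does not prove this lemma at all --- it is presented as a compendium of standard matrix-analysis facts with a pointer to \cite{horn2012matrix1} --- so there is no argument of the paper's to compare against; what you supply is a self-contained proof, and it is essentially correct. The devices you invoke (Ky Fan's maximum principle for (i), Weyl's inequality plus a dimension count for (ii), the doubly stochastic matrix $(\abs{W_{ij}}^2)$ with Birkhoff--von Neumann and rearrangement for (iii)--(iv), and Courant--Fischer for the Cauchy interlacing in (v)) are exactly the standard ones. The converse direction of the equality clause in (ii), which you rightly single out as the only delicate step, is handled correctly: the two subspaces have dimensions summing to at least $n+1$, and the collapsed Rayleigh-quotient chain does localize $x$ to the three eigenspaces, since $x$ lies in the span of eigenvectors of $A$ with eigenvalue $\ge\l_i(A)$ (resp.\ of $A+B$ with eigenvalue $\le\l_i(A+B)$) and attains the extreme value of the quotient there.

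One slip: your ``immediate'' direction of (ii) uses only two of the three hypotheses. The existence of $x\ne0$ with $Ax=\l_i(A)x$ and $Bx=\l_n(B)x$ shows only that $\l_i(A)+\l_n(B)$ is \emph{some} eigenvalue of $A+B$, not the $i$-th largest: take $A=\diag(1,0)$, $B=\diag(0,5)$, $i=1$, $x=e_1$; then $\l_1(A)+\l_2(B)=1$ while $\l_1(A+B)=5$. The lemma's equality condition includes the third requirement $(A+B)x=\l_i(A+B)x$ precisely to rule this out, and once it is used the direction is trivial: $(A+B)x$ equals both $(\l_i(A)+\l_n(B))x$ and $\l_i(A+B)x$, so the two scalars coincide. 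Restore that hypothesis and the proof is complete.
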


Given a bipartite matrix $M\in \cM_m(\bbC)\otimes \cM_n(\bbC)$, we denote $M^\G:=(T\otimes I)M$ as its partial transpose, where $T$ and $I$ denote the transpose map and identity map respectively. Unlike the global transpose map, the eigenvalues of the matrix can change after applying the partial transpose.
The following result gives a full characterization of the eigenvalues of the partial transpose of a pure state.
\begin{lemma}
	\label{ept}
	\cite{rana2013negative}
	Let $\ket{\phi}\in \bbC^m\otimes \bbC^n$ ($m\le n$) be written in its Schmidt decomposition as $\ket{\phi}=\sum_{j=1}^m a_j\ket{b_j}\otimes \ket{c_j}$. Then the eigenvalues of $\proj{\phi}^\Gamma$ are $a_j^2$ for $1\le j\le m$, and $\pm a_ia_j$ for $1\le i<j\le m$, and $m(n-m)$ extra zero eigenvlaues. Moreover, the eigenspace corresponding to the eigenvalue $\pm a_ia_j$ is spanned by $\ket{b_i^*}\otimes \ket{c_j}\pm\ket{b_j^*}\otimes \ket{c_i}$, the eigenspace corresponding to the eigenvalue $a_i^2$ is spanned by $\ket{b_i^*}\otimes \ket{c_i}$. 
\end{lemma}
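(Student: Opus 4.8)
The plan is to write $\proj{\phi}^\Gamma$ out explicitly in the Schmidt basis and diagonalize it by inspection. Starting from $\proj{\phi}=\sum_{i,j=1}^m a_ia_j\,\ketbra{b_i}{b_j}\otimes\ketbra{c_i}{c_j}$ and using the identity $(\ketbra{u}{v})^T=\ketbra{v^*}{u^*}$, the partial transpose on the first factor is
\[
\proj{\phi}^\Gamma=\sum_{i,j=1}^m a_ia_j\,\ketbra{b_j^*}{b_i^*}\otimes\ketbra{c_i}{c_j}.
\]
Because $\{\ket{b_j^*}\}_{j=1}^m$ is orthonormal in $\bbC^m$ and $\{\ket{c_j}\}_{j=1}^m$ is orthonormal in $\bbC^n$, one reads off directly that $\proj{\phi}^\Gamma$ sends $\ket{b_k^*}\otimes\ket{c_l}\mapsto a_ka_l\,\ket{b_l^*}\otimes\ket{c_k}$ for $1\le k,l\le m$, and annihilates the orthogonal complement of $V:=\lin\{\ket{b_j^*}\otimes\ket{c_i}:1\le i,j\le m\}$. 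Since $V$ has dimension $m^2$, its orthogonal complement in $\bbC^m\otimes\bbC^n$ has dimension $mn-m^2=m(n-m)$, which accounts for the asserted zero eigenvalues.

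It then remains to diagonalize the restriction of $\proj{\phi}^\Gamma$ to the $m^2$-dimensional space $V$, which I would do with the symmetric/antisymmetric basis. From the action above: (i) each $\ket{b_i^*}\otimes\ket{c_i}$ is an eigenvector with eigenvalue $a_i^2$; (ii) for $i<j$, the vectors $\ket{b_i^*}\otimes\ket{c_j}\pm\ket{b_j^*}\otimes\ket{c_i}$ are mapped to $\pm a_ia_j$ times themselves. These $m+2\binom{m}{2}=m^2$ vectors are mutually orthogonal, hence a basis of $V$; adjoining the $m(n-m)$ zero modes yields a complete orthogonal eigenbasis of the $mn\times mn$ matrix $\proj{\phi}^\Gamma$, and the eigenspace descriptions in the statement follow directly from (i)--(ii).

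This argument is essentially a direct computation, so there is no substantive obstacle; the only points needing care are bookkeeping — tracking the complex conjugations so that the output vectors carry $\ket{b_j^*}$ rather than $\ket{b_j}$, checking that $m+m(m-1)+m(n-m)=mn$ is an exact count, and noting that when some Schmidt coefficients coincide the listed vectors merely span (rather than uniquely determine) the resulting higher-dimensional eigenspaces, which is exactly what the statement asserts. A clean way to package all of this is to observe that the map $\ket{b_j^*}\otimes\ket{c_i}\mapsto\ket{j}\otimes\ket{i}$ is a unitary from $V$ onto $\bbC^m\otimes\bbC^m$ intertwining $\proj{\phi}^\Gamma|_V$ with $\sum_{i,j}a_ia_j\,\ketbra{j}{i}\otimes\ketbra{i}{j}=(\proj{\psi})^\Gamma$ for the computational-basis state $\ket{\psi}=\sum_j a_j\ket{jj}$, thereby reducing the general case to the standard one.
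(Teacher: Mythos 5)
Your computation is correct and complete: the identity $(\ketbra{b_i}{b_j})^T=\ketbra{b_j^*}{b_i^*}$ gives $\proj{\phi}^\Gamma=\sum_{i,j}a_ia_j\,\ketbra{b_j^*}{b_i^*}\otimes\ketbra{c_i}{c_j}$, the stated vectors are then manifestly an orthogonal eigenbasis of the invariant subspace $V=\lin\{\ket{b_j^*}\otimes\ket{c_i}\}$ with the claimed eigenvalues, and the dimension count $m+m(m-1)+m(n-m)=mn$ closes the argument. Note that the paper itself offers no proof of this lemma (it is imported from the cited reference), so there is nothing to compare against; your direct diagonalization is the standard argument for this fact, and the final reduction to the computational-basis state $\sum_j a_j\ket{jj}$ is a clean way to organize it.
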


\subsection{Block-positive matrices and quantum entanglement}
A quantum state $\r\in \cM_m(\bbC)\otimes \cM_n(\bbC)$
is called separable if it can be written as $\r=\sum_i p_i\proj{v_i}\otimes \proj{w_i}$, with $p_i\ge 0$, $\sum_i p_i=1$, $\ket{v_i}\in\bbC^m$ and $\ket{w_i}\in\bbC^n$.
The famous positive-partial-transpose (PPT) criterion states that if $\r$ is separable, then $\r^\G\ge 0$. But this criterion is sufficient only if $mn\le 6$. Some examples of PPT entangled states (PPTES) in higher-dimensional systems were constructed based on the range criterion \cite{horodecki1997separability}. According to this criterion, if a state $\r$ is separable then there must exist a set of product vectors $\ket{e_k,f_k}$ that span $\cR(\r)$ such that the set of vectors $\ket{e_k^*,f_k}$ span $\cR(\r^\G)$. Further, a PPTES $\r$ is called to be edge state if there is
no product vector $\ket{a,b}\in \cR(\r)$ such that $\ket{a^*,b}\in \cR(\r^\G)$. By definition, edge states lie
on the boundary between NPT and PPT states and violate the range criterion in an extreme manner. It is also known that any edge state has deficient rank, otherwise it lies in the interior of the set of PPT states \cite{grabowski2005geometry}.

A bipartite Hermtian matirx $M\in \cM_m(\bbC)\otimes \cM_n(\bbC)$ is called an $m\times n$ block-positive matrix if  $\bra{a,b}W\ket{a,b}\ge 0$ holds for any product vector $\ket{a,b}\in \bbC^m\otimes\bbC^n$. The set of $m\times n$ block-positive matrices is denoted as $\bp_{m,n}$. By definition, the partial transpose of a block-positive matrix $W$ is also block-positive, since $\bra{a,b}W^\G\ket{a,b}=\bra{a^*,b}W\ket{a^*,b}\ge 0$.
The following is another property of block-positive matrices.
\begin{lemma}
	\label{proj}
Let $W=[W_{i,j}]_{i,j=1}^m\in \bp_{m,n}$. 

(i) If $W_{k,k}=0$, then $W_{k,j}=W_{j,k}=0$ for any $j$.

(ii)  If for each $i$, the $k$-th diagonal entry of $W_{i,i}$ vanishes, then the entire $k$-th row and column of every block $W_{i,j}$ must be zero.
\end{lemma}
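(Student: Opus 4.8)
The plan is to deduce both parts from a single structural fact: if $W=[W_{i,j}]_{i,j=1}^m\in\bp_{m,n}$, then for every fixed $\ket{b}\in\bbC^n$ the $m\times m$ matrix $N_{\ket{b}}:=\big[\bra{b}W_{i,j}\ket{b}\big]_{i,j=1}^m$ is positive semidefinite, because $\bra{a}N_{\ket{b}}\ket{a}=\bra{a,b}W\ket{a,b}\ge 0$ for all $\ket{a}\in\bbC^m$; specializing $\ket{a}$ to a standard basis vector also shows that each diagonal block $W_{i,i}$ is itself positive semidefinite. For part (i), assume $W_{k,k}=0$, so the $(k,k)$-entry of $N_{\ket{b}}$ is zero for every $\ket{b}$. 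A positive semidefinite matrix with a vanishing diagonal entry has the whole corresponding row and column equal to zero; concretely, the $2\times 2$ principal submatrix of $N_{\ket{b}}$ on indices $k,j$ has determinant $-|\bra{b}W_{k,j}\ket{b}|^2$, which must be nonnegative, so $\bra{b}W_{k,j}\ket{b}=\bra{b}W_{j,k}\ket{b}=0$. Since this holds for all $\ket{b}\in\bbC^n$ and any complex matrix $A$ with $\bra{b}A\ket{b}\equiv 0$ must be zero (polarization identity), we get $W_{k,j}=W_{j,k}=0$ for every $j$.

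For part (ii), I would reduce to (i) by swapping the two tensor factors. Let $F$ be the swap operator on $\bbC^m\ox\bbC^n$ and set $W':=FWF\in\cM_n(\bbC)\ox\cM_m(\bbC)$; then $W'\in\bp_{n,m}$ since $\bra{b,a}W'\ket{b,a}=\bra{a,b}W\ket{a,b}\ge 0$, and if we write $W'=[W'_{k,l}]_{k,l=1}^n$ with $m\times m$ blocks then $(W'_{k,l})_{i,j}=(W_{i,j})_{k,l}$. The hypothesis of (ii) says exactly that every diagonal entry of the block $W'_{k,k}$ vanishes; but $W'_{k,k}$ is a diagonal block of the block-positive matrix $W'$, hence positive semidefinite, hence (a positive semidefinite matrix with zero diagonal is zero) $W'_{k,k}=0$. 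Applying part (i) to $W'$ now yields $W'_{k,l}=W'_{l,k}=0$ for all $l$, i.e. $(W_{i,j})_{k,l}=(W_{i,j})_{l,k}=0$ for all $i,j,l$, which is precisely the claim that the $k$-th row and $k$-th column of every block $W_{i,j}$ are zero.

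I do not expect a genuine obstacle here: the whole argument rests on two elementary linear-algebra facts --- a positive semidefinite matrix with a zero diagonal entry has a zero row and column, and $\bra{b}A\ket{b}\equiv 0$ forces $A=0$ over $\bbC$ --- together with the routine bookkeeping that block-positivity is symmetric under $W\mapsto FWF$ and that the blocks relabel as $(W'_{k,l})_{i,j}=(W_{i,j})_{k,l}$. If one prefers not to invoke the swap, (ii) can be proved directly by taking $\ket{b}=s\ket{k}+t\ket{l}$ in $\bra{a,b}W\ket{a,b}\ge 0$, observing that the $\ket{k}\!\bra{k}$ part already vanishes by the hypothesis and the argument of (i), and letting $s,t\in\bbC$ vary to annihilate the cross term $\big[(W_{i,j})_{k,l}\big]_{i,j}$; the reduction to (i) is simply shorter to write.
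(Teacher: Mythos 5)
Your proof is correct and rests on the same core fact as the paper's: a one-sided compression of a block-positive matrix by a vector on either tensor factor is positive semidefinite, and a positive semidefinite matrix with a vanishing diagonal entry has the entire corresponding row and column equal to zero. The differences are cosmetic --- you compress on the second factor and finish (i) with the determinant-plus-polarization argument where the paper compresses on the first factor and merely asserts that a violating vector $\ket{a}$ exists, and you handle (ii) by a formal swap reduction to (i) where the paper repeats the direct construction --- so your write-up in fact fills in details the paper leaves implicit.
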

\begin{proof}
(i)	Without loss of generality, suppose $W_{1,1}=0$ and $W_{1,2}\neq 0$. Then there exist $a_1,a_2\in \bbC$ such that the matrix $|a_2|^2 W_{2,2}+a_1^*a_2W_{1,2}+a_1a_2^*W_{1,2}^\dg$ is not positive semidefinte. Consequently, let $\ket{a}:=(a_1,a_2,0,\cdots,0)\in \bbC^m$. Then 
the order-$n$ matrix $(\bra{a}\otimes I_n) W(\ket{a}\otimes I_n)$ is not positive semidefinite, which contradicts the block-positivity of $W$.  Hence $W_{1,2}=0$, and by the same argument, $W_{1,j} = W_{j,1} = 0$ for all $j$.

(ii) Since all diagonal blocks $W_{i,i}\ge 0$, their $k$-th rows and columns vanish whenever the $(k,k)$-entry is zero. If any $W_{i,j}$ had a nonzero entry in these rows or columns, we could again construct (as in (i)) a vector $\ket{b} \in \bbC^m$ making $(\bra{b}\otimes I_n)W(\ket{b}\otimes I_n)$ non-positive, violating the block-positivity of $W$.
\end{proof}

{\bf Remark.} An equivalent form of the above lemma is that $\cR(W)\subseteq \cR(W_{A})\otimes \cR(W_{B})$, where $W_A := [\tr(W_{i,j})]_{i,j=1}^m$ and $W_B := \sum_{i=1}^m W_{i,i}$. 

The following lemma establishes an upper bound for the number of negative eigenvalues of block-positive matrices using methods from algebraic geometry.
\begin{lemma}\cite{sarbicki2008spectral}
	\label{pv}
	(i)	Suppose $\cV\subseteq \bbC^m\otimes \bbC^n$ with $\dim(\cV)>(m-1)(n-1)$, then $\cV$ contains at least one product vector.
	
	(ii) Suppose $W\in \bp_{m,n}$, then W has no more than $(m-1)(n-1)$ negative eigenvalues.
\end{lemma}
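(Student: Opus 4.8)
The plan is to obtain part (ii) as an immediate corollary of part (i), so that all the real work sits in (i), which is a statement about the geometry of the Segre variety.

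For (i) I would argue projectively. Nonzero product vectors $\ket{a}\otimes\ket{b}$, taken up to an overall scalar, are exactly the points of the Segre variety $\Sigma\subseteq\bbP(\bbC^m\otimes\bbC^n)=\bbP^{mn-1}$, i.e. the image of the Segre embedding $\bbP^{m-1}\times\bbP^{n-1}\hookrightarrow\bbP^{mn-1}$; this is an irreducible projective variety with $\dim\Sigma=(m-1)+(n-1)=m+n-2$. A linear subspace $\cV\subseteq\bbC^m\otimes\bbC^n$ determines a linear projective subspace $\bbP(\cV)$ of dimension $\dim\cV-1$, so the hypothesis $\dim\cV\ge(m-1)(n-1)+1$ yields
\[
\dim\Sigma+\dim\bbP(\cV)\;\ge\;(m+n-2)+(m-1)(n-1)\;=\;mn-1\;=\;\dim\bbP^{mn-1}.
\]
I would then invoke the projective dimension theorem: over the algebraically closed field $\bbC$, two closed subvarieties of $\bbP^N$ the sum of whose dimensions is at least $N$ must intersect. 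Hence $\Sigma\cap\bbP(\cV)\ne\es$, which says precisely that $\cV$ contains a nonzero product vector.

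For (ii), suppose toward a contradiction that $W\in\bp_{m,n}$ has $k>(m-1)(n-1)$ negative eigenvalues, and let $\cV$ be the span of the associated eigenvectors, so that $W$ is negative definite on $\cV$ and $\dim\cV=k$. By (i) there is a nonzero product vector $\ket{a,b}\in\cV$, and then $\bra{a,b}W\ket{a,b}<0$, contradicting block positivity. Hence $k\le(m-1)(n-1)$.

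The hard part is (i), and specifically being honest about what the projective dimension theorem requires: both projectivity and algebraic closure are essential (the affine analogue and the real analogue both fail), and one must get $\dim\Sigma=m+n-2$ right — it is this value, rather than $mn-1$, that makes the bound $(m-1)(n-1)$ emerge. Once (i) is available, part (ii) is routine linear algebra.
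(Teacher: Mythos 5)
Your proof is correct and takes the same route as the cited source (the paper itself defers to \cite{sarbicki2008spectral} and only notes that the bound comes from ``methods from algebraic geometry''): part (i) via the Segre variety of dimension $m+n-2$ and the projective dimension theorem in $\bbP^{mn-1}$, and part (ii) by applying (i) to the negative eigenspace, on which $W$ is negative definite. The dimension count $(m+n-2)+(m-1)(n-1)=mn-1$ and the appeal to algebraic closure and projectivity are exactly the essential points.
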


Given a matrix belonging to $\bp_{m,n}$, if it has at least one negative eigenvalue, then it is a legitimate entanglement witness (EW).
For convenience, we refer to such EW as an $m\times n$ EW.
A decomposable entanglement witness (DEW) admits the expression
$P+Q^\G$ with $P,Q\ge 0$, otherwise, it is called nondecomposable EW (NDEW). By definition, DEW cannot detect PPT entangled states, i.e, $\tr(W \sigma)\ge 0$ for all PPT states $\sigma$. Further, $W^\G$ is also an NDEW, given that $W$ is an NDEW, since it remains block-positive and satisfies $\tr(W^\G\r^\G)=\tr(W\r)<0$, where $\r$ is a PPT entangled state detected by $W$. For $mn\le 6$, since every $m\times n$ PPT state is separable, we have every $m\times n$ EW is decomposable. For higher-dimensional systems, NDEWs have to be constructed to detect PPTES. One of such way is based on edge states.
\begin{lemma}\cite{lewenstein2001characterization}
	\label{le:ndew-edge}
	Any $m\times n$ NDEW can be represented as $P+Q^\G-\epsilon I_{mn}$, with  $0<\e\le \inf_{\ket{c,d},\norm{c}=\norm{d}=1}\bra{c,d}(P+Q^\G)\ket{c,d}$. Moreover,
	$P, Q\ge 0$ such that $\tr(P \delta)=\tr(Q^\G \delta)=0$ for some edge state $\delta$.
\end{lemma}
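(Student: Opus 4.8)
The plan is to produce the edge state and the decomposition at the same time, by extremizing $W$ over PPT states. First I would set $\e_0:=\max\{\,-\tr(W\sigma):\sigma\in\cM_m(\bbC)\otimes\cM_n(\bbC),\ \sigma\ge0,\ \sigma^\G\ge0,\ \tr(\sigma)=1\,\}$. The set of normalized PPT states is nonempty (it contains $I_{mn}/(mn)$), convex and compact, so the maximum is attained at some $\delta$; and $\e_0>0$, because the decomposable cone $\{P+Q^\G:P,Q\ge0\}$ is exactly the dual (in the trace inner product) of the PPT cone, so a nondecomposable $W$ must satisfy $\tr(W\sigma)<0$ for some PPT state $\sigma$. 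Now put $D:=W+\e_0 I_{mn}$. By the choice of $\e_0$ we have $\tr(D\sigma)=\tr(W\sigma)+\e_0\ge0$ for every PPT state $\sigma$, so $D$ lies in the dual cone of the PPT cone, i.e.\ $D=P+Q^\G$ for some $P,Q\ge0$. This already yields $W=P+Q^\G-\e_0 I_{mn}$, and the bound $0<\e_0\le\inf_{\norm{c}=\norm{d}=1}\bra{c,d}(P+Q^\G)\ket{c,d}$ is immediate, since $\bra{c,d}D\ket{c,d}=\bra{c,d}W\ket{c,d}+\e_0\ge\e_0$ by block-positivity of $W$.

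Next I would read off the range conditions from complementary slackness at the maximizer. Since $\tr(D\delta)=\tr(W\delta)+\e_0=0$ and $D=P+Q^\G$, we get $\tr(P\delta)+\tr(Q\delta^\G)=0$ with both terms nonnegative, hence $\tr(P\delta)=\tr(Q^\G\delta)=0$, equivalently $\cR(\delta)\subseteq\cK(P)$ and $\cR(\delta^\G)\subseteq\cK(Q)$. It remains to verify that $\delta$ is an edge state. It is PPT by construction, and it cannot be separable, since $\tr(W\delta)=-\e_0<0$ while $W$ is block-positive; so $\delta$ is a PPT entangled state. If it were not an edge state, there would be a product vector $\ket{a,b}\in\cR(\delta)$ with $\ket{a^*,b}\in\cR(\delta^\G)$; the range inclusions then give $P\ket{a,b}=0$ and $Q\ket{a^*,b}=0$, so $\bra{a,b}P\ket{a,b}=0$ and $\bra{a,b}Q^\G\ket{a,b}=\bra{a^*,b}Q\ket{a^*,b}=0$, whence $\bra{a,b}D\ket{a,b}=0$, i.e.\ $\bra{a,b}W\ket{a,b}=-\e_0<0$, contradicting $W\in\bp_{m,n}$. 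Thus $\delta$ is an edge state with $\tr(P\delta)=\tr(Q^\G\delta)=0$, and the lemma holds with $\e=\e_0$.

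The only non-elementary ingredient is the identification of the decomposable cone with the dual of the PPT cone, used twice above; since both are closed convex cones this is the standard biduality applied to the (elementary) fact that decomposable operators are positive on all PPT states, and it may also be cited directly from \cite{lewenstein2001characterization}. I expect the crux to be conceptual rather than computational: the observation in the last step that a PPT state maximizing $-\tr(W\sigma)$ is automatically an edge state whenever $W$ is block-positive. A more hands-on alternative would instead construct an edge state via the subtraction procedure of \cite{lewenstein2001characterization} --- repeatedly subtract rank-one product projectors, with maximal weight preserving positivity of the state and of its partial transpose, from a PPT entangled state detected by $W$, so that each step strictly lowers a rank and the procedure terminates --- but that route delivers only the edge state and not the pair $(P,Q)$, so the duality would still be needed; the extremal-state argument has the advantage of producing $\delta$, $P$, and $Q$ together.
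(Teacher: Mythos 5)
Your argument is correct and complete. The paper itself gives no proof of this lemma (it is quoted from \cite{lewenstein2001characterization}), and your reconstruction follows essentially the standard route of that reference: extremize $-\tr(W\sigma)$ over the compact convex set of normalized PPT states, use the duality between the PPT cone and the cone $\{P+Q^\G: P,Q\ge 0\}$ to decompose $W+\e_0 I_{mn}$, and read off $\tr(P\delta)=\tr(Q^\G\delta)=0$ and the edge property of the maximizer $\delta$ from complementary slackness together with block-positivity of $W$. The one point that genuinely needs care --- that the decomposable cone is closed, so that biduality gives dual(PPT) $=\{P+Q^\G\}$ exactly rather than its closure --- you correctly flag; it follows since that cone is generated by the compact set of trace-one projectors $\proj{\psi}$ and $\proj{\psi}^\G$, which is bounded away from the origin.
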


One generic form is that $P$ and $Q$ denote the projectors onto the kernel space of $\sigma$ and $\sigma^\G$, respectively. The positivity of $\epsilon$ follows from the edge state properties. However, we note that not every pair of $P$ and $Q$ satisfying
$\tr(P \delta)=\tr(Q^\G \delta)=0$ can form a legitimate NDEW. For example, let $m=n=3$ and $\{\ket{a_j,b_j}\}_{j=1}^5$ be a set of real two-qutrit unextendible product basis (such basis is a set of orthogonal product vectors that contain no product vector in their complementary space) \cite{divincenzo2003unextendible}. Then $\s=I_9-\sum^5_{j=1}\proj{a_j,b_j}$
is a two-qutrit edge state of rank four. Let $P=Q=Q^\G=\sum^4_{j=1}\proj{a_j,b_j}$. We have $\tr(P \s)=\tr(Q^\G \s)=0$. However, 
the infimum of $\epsilon$ is zero as we can choose $\ket{c,d}=\ket{a_5,b_5}$. This contradicts the fact that $\epsilon$ is strictly positive.

Among EWs, optimal witnesses are of particular importance. 
The knowledge of all optimal entanglement
witnesses fully characterizes the set of entangled states.
Given an $m\times n$ EW $W$, we refer to $D_W$ as the set of all entangled states detected by $W$. We say that $W_1$ is finer than $W_2$ when $D_{W_1} \supseteq D_{W_1}$. Next, $W$ is optimal when no EW is finer than $W$. It is proved that $W$ is an optimal EW iff for any nonzero positive operator $P$, the operator $W-P$ is no longer block-positive \cite{lewenstein2000optimization1}. By definition, if an EW $W$ satisfies that $\tr(W\r)=0$ for a full-rank separable state $\r$, then $W$ is optimal.
Let $P_W$ consist of product vectors $\ket{a,b}$ such that $\bra{a,b}W\ket{a,b}=0$. We say that $W$ has the spanning property if $\span P_W=\bbC^m\otimes\bbC^n$. It is known that any EW possessing the spanning property is optimal \cite{lewenstein2000optimization1}. 
Further, a linear bipartite subspace $\cS$ is called completely entangled (CES) if there is no product vector in $\cS$. For DEW, it is straightforward to verify the following claim holds.
\begin{lemma}
	\label{opdc}
	A DEW $W$ is optimal if and only if $W=Q^\G$, where $Q\ge 0$ is supported on some CES.
\end{lemma}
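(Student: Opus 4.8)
The claim is an equivalence, so I would prove the two implications separately.

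For the "if" direction, suppose $W=Q^\G$ with $Q\ge 0$ supported on a completely entangled subspace $\cS$. First observe that $W$ is indeed a DEW: it is block-positive because the partial transpose of the positive operator $Q$ is block-positive, and it has a negative eigenvalue because $Q$ being supported on a CES forces $Q$ to have rank at least $2$ — indeed, were $Q$ rank one, say $Q=\proj{\psi}$, its support would be one-dimensional; but a CES contains no product vector, and by Lemma \ref{ept} a nonproduct $\ket{\psi}$ has $\proj{\psi}^\G$ with both positive and negative eigenvalues, so $W=Q^\G$ is a genuine EW. For optimality I would use the characterization from \cite{lewenstein2000optimization1}: $W$ is optimal iff $W-P$ fails to be block-positive for every nonzero $P\ge 0$. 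Equivalently I can try to show $W$ has the spanning property. The key point is that the product vectors $\ket{a,b}$ with $\bra{a,b}W\ket{a,b}=0$ are exactly those with $\bra{a^*,b}Q\ket{a^*,b}=0$, i.e. $\ket{a^*,b}\in\cK(Q)=\cS^\perp$. So I need: the product vectors lying in $\cS^\perp$ span $\bbC^m\otimes\bbC^n$ (after the $a\mapsto a^*$ relabelling, which preserves spanning). Since $\cS$ is completely entangled, $\cS$ can have dimension at most $(m-1)(n-1)$ by Lemma \ref{pv}(i), hence $\dim\cS^\perp\ge mn-(m-1)(n-1)=m+n-1$. A classical fact (Parthasarathy / the genericity of product vectors in large subspaces) is that a subspace of dimension $\ge m+n-1$ is spanned by the product vectors it contains; I would either cite this or sketch it via the Segre variety having codimension $(m-1)(n-1)$ so that a generic subspace of that dimension meets it in a spanning set, handling the non-generic case by noting $\cK(Q)$ can be enlarged to dimension exactly $m+n-1$ only if... — actually the cleanest route is to invoke the known result that $W=Q^\G$ with $Q$ supported on a CES of \emph{maximal} dimension $(m-1)(n-1)$ has the spanning property, and for smaller CES to pass to a DEW finer than $W$; but since we only need \emph{an} optimal decomposition characterization, I'll instead argue directly that any positive $P$ with $W-P$ block-positive must satisfy $\bra{a^*,b}(Q-P^\G)\ket{a^*,b}\ge0$ on all product vectors while $\bra{a^*,b}Q\ket{a^*,b}=0$ whenever $\ket{a^*,b}\perp\cS$, forcing $P^\G$, hence $P$, to vanish on a spanning set of product vectors and therefore $P=0$.

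For the "only if" direction, suppose the DEW $W=P+Q^\G$ (with $P,Q\ge0$) is optimal. I want to conclude $P=0$ and that $Q$ is supported on a CES. If $P\ne 0$, then $W-P=Q^\G$ is still block-positive (partial transpose of $Q\ge0$), contradicting optimality via the \cite{lewenstein2000optimization1} criterion; hence $P=0$ and $W=Q^\G$. It remains to see $Q$ is supported on a CES. If $\cR(Q)$ contained a product vector $\ket{a^*,b}$, pick the corresponding $\ket{e}:=\ket{a^*}$... more precisely, if $\ket{x,y}\in\cR(Q)$ is a product vector, then $\proj{x,y}\ge 0$ and a small multiple $t\proj{x,y}$ satisfies $Q-t\proj{x,y}\ge0$ for $t>0$ small (since $\ket{x,y}\in\cR(Q)$), so $W-t\proj{x^*,y}^\G=(Q-t\proj{x,y})^\G$ is again block-positive — wait, I must be careful that $P'=t\proj{x^*,y}^\G$ need not be positive. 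The correct subtraction: $\proj{x^*,y}^\G = \proj{x,y}$ is rank-one positive, and $W - t\proj{x,y}$... but $\proj{x,y}$ is not of the form $R^\G$ with the structure I want. Let me redo: I subtract the \emph{positive} operator $t\,\ketbra{x^*}{x^*}\otimes\ketbra{y}{y}$ from $W$; since $(\,\ketbra{x^*}{x^*}\otimes\ketbra{y}{y})^\G = \ketbra{x}{x}\otimes\ketbra{y}{y} = \proj{x,y}$ with $\ket{x,y}\in\cR(Q)$, we get $W - t\proj{x^*,y}^{\,} $— I keep fumbling the transpose bookkeeping, but the substance is: $W = Q^\G$ and $Q \ge t\proj{x,y}$ for small $t$, so $W - t(\proj{x,y})^\G$ is block-positive; and $(\proj{x,y})^\G = \proj{x^*,y}$ up to the conjugation convention, which is a nonzero positive operator. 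This contradicts optimality. Hence $\cR(Q)$ contains no product vector, i.e. $Q$ is supported on a CES, completing the proof.

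The main obstacle is the spanning-property step in the "if" direction: showing that a completely entangled subspace $\cS$ — equivalently its orthogonal complement $\cS^\perp$ of dimension $\ge m+n-1$ — has the property that the product vectors in $\cS^\perp$ span the whole space, so that $W=Q^\G$ is optimal. I expect this to follow from the algebraic-geometric machinery already invoked for Lemma \ref{pv} (the Segre variety has the right dimension/degree), together with the fact from \cite{lewenstein2000optimization1} linking the spanning property to optimality; but pinning down that \emph{every} such $Q$ (not just the generic one) gives an optimal witness, rather than merely one that is optimal \emph{among decomposable} witnesses, is the delicate point, and the safest fix is the direct argument sketched above: any $0\ne P\ge0$ with $W-P$ block-positive would have to annihilate a spanning family of product vectors and hence be $0$.
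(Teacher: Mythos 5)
The paper itself offers no proof of this lemma (it is asserted as ``straightforward to verify''), so your argument has to stand on its own. Your \emph{only if} direction is correct and is the standard one: optimality forces $P=0$ because $W-P=Q^\G$ stays block-positive, and a product vector $\ket{x,y}\in\cR(Q)$ would allow subtracting the nonzero positive operator $t\,\proj{x^*,y}=t(\proj{x,y})^\G$ while preserving block-positivity. (This is also the only direction the paper actually uses, e.g.\ in the proof of Theorem \ref{th:lambdaIN[-1/2,1)}~(iv).)

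The \emph{if} direction, however, contains a genuine gap, precisely at the point you yourself flag as delicate. The claim that a subspace of dimension at least $m+n-1$ is spanned by the product vectors it contains is false; it is not even true that such a subspace must \emph{contain} a product vector. The bound behind Lemma \ref{pv}(i) only guarantees a product vector in subspaces of dimension strictly greater than $(m-1)(n-1)$, and $m+n-1\le(m-1)(n-1)$ as soon as $(m-2)(n-2)\ge 2$. Concretely, in $\bbC^3\otimes\bbC^4$ the projectivized product vectors form a $5$-dimensional subvariety of $\bbP^{11}$, so a generic $6$-dimensional subspace meets it in the empty set; since orthocomplementation preserves genericity, there exist completely entangled subspaces $\cS$ of dimension $6=(m-1)(n-1)$ whose orthocomplement $\cS^\perp=\cK(Q)$ is \emph{also} completely entangled. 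For such a $Q$ the set $P_W$ is empty, so $W=Q^\G$ has no spanning property whatsoever; in fact $\min\bra{a,b}W\ket{a,b}>0$ over unit product vectors, so $W-cI_{mn}$ is block-positive for some $c>0$ and, by the very optimality criterion you invoke, such a $W$ would \emph{not} be optimal. Your fallback ``direct argument'' is circular: it concludes that $P$ ``vanishes on a spanning set of product vectors'' while the existence of any such spanning set is exactly what is in question. This is a known subtlety --- the sufficiency half of the Lewenstein--Kraus--Cirac--Horodecki characterization of optimal decomposable witnesses is not straightforward and was revisited by Augusiak, Tura and Lewenstein, J.\ Phys.\ A 44, 212001 (2011) --- and any complete treatment (indeed, any correct formulation of the lemma in dimensions with $(m-2)(n-2)\ge2$) must confront completely entangled subspaces whose orthocomplements contain no, or too few, product vectors. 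As written, your proof of the \emph{if} direction does not go through.
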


An EW is called extremal if $W-B$ is no longer an EW, where $B$ is an arbitrary block-positive operator such that $B\neq k W$. By definition, any extremal EW is optimal. For DEWs, $W$ is extremal if and only if it can be written as $\proj{\psi}^\G$ for some entangled state $\ket{\psi}$ \cite{marciniak2010extremal}.

Two matrices $A,B\in \cM_m(\bbC)\otimes \cM_n(\bbC)$ are called locally equivalent if there exist invertible matrices $X\in \cM_m(\bbC)$ and 
$Y\in \cM_n(\bbC)$ such that $(X \otimes Y) A (X \otimes Y)^\dg =B$. 
By Sylvester’s Theorem, two locally equivalent matrices have the same inertia. Hence, it is straightforward to verify that all the above-mentioned notions and properties are invariant up to local equivalence. For example, 
let $A,B$ be invertible matrices, then $(A \otimes B)\r(A \otimes B)^\dg$ remains a PPT (NPT) state  given that $\r$ is a PPT (NPT) state, $(A \otimes B)W(A \otimes B)^\dg$ remains an optimal EW given that $W$ is an optimal EW, and so on.

\subsection{Absolutely separable and PPT states}
According to Lemma \ref{ineqe} (iv), the spectral properties of an entanglement witness (EW) can be characterized using results pertaining to absolutely separable and positive partial transpose (PPT) states, as will be discussed in the following section. An absolutely separable state (AS) is a state that remains separable under any global unitary transformation \cite{kus2001geometry, knill2003separability}.  A first-coming example of such states is the maximally mixed state.
We also know from the definition that absolute separability is a spectral property, and the problem is to find conditions on the spectrum of a state for it to be AS. One motivation for characterizing such states is that it is experimentally easier to determine the eigenvalues of a state rather than reconstructing the state itself \cite{ekert2002direct,tanaka2014determining}. Analogous to AS states, states that remain PPT under any global unitary transformation are termed absolutely PPT (AP) states. In the ensuing discussion, we will denote the sets of AS states and AP states in 
$\cM_m(\bbC)\otimes \cM_n(\bbC)$ as $\as_{m,n}$ and $\app_{m,n}$, respectively. It directly follows that $\as_{m,n}\subseteq \app_{m,n}$. The characterization of $\as_{2,2}$ was initially provided in \cite{verstraete2001maximally}. 
It was also proved that there is a ball of AS states centered at the maximally mixed state, which is known as the {\em maximal ball}. Specifically,
if the $m\times n$ state $\r$ satisfies $\tr(\rho^2)\le \frac{1}{mn-1}$, then $\rho\in \as_{m,n}$ \cite{gurvits2002largest}. Building on this result, consider the family of $m\times m$ states
\begin{eqnarray}
	\label{zzz1}
	\z_1:=c\diag(\frac{m+1}{m-1},\cdots,\frac{m+1}{m-1},1,\cdots,1),
	\end{eqnarray}
where $c$ is the normalization factor, and the multiplicity of eigenvalue $\frac{m+1}{m-1}$ is $l$. It can be verified by direct computation that
 $\tr(\z_1^2)\le \frac{1}{m^2-1}$ holds for any $l\in [1,m^2]$. Hence $\z_1\in \as_{m,m}$.
On another front, by characterizing the random robustness of pure states, it was proved that the unnormalized state $\proj{\psi}+\frac{1}{2}I_{mn}$ is separable for arbitrary pure state $\ket{\psi}\in \bbC^m\otimes \bbC^n$ \cite{Vidal1999Robustness}, which implies that 
\begin{eqnarray}
	\label{311}
	\z_2:=\frac{1}{mn+2}\diag(3,1,\cdots,1)\in \as_{m,n}.
\end{eqnarray}

Regarding the set $\app_{m,n}$, a necessary and sufficient condition for a state to belong to it has been established, represented by a finite set of linear matrix inequalities \cite{hildebrand2007positive}. 
Later, it was proved that $\as_{2,n}=\app_{2,n}$ for arbitrary $n$ \cite{johnston2013separability}. The criterion indicates that $\r\in \as_{2,n}$ ($\app_{2,n}$) if and only if 
\begin{eqnarray}
	\label{as2n}
\lambda_1\le \lambda_{2n-1}+2\sqrt{\lambda_{2n-2}\lambda_{2n}}.
\end{eqnarray}
However, the problem of whether the two sets $\as_{m,n}$ and $\app_{m,n}$ are identical for $m,n\ge 3 $ still remains open. 
Note that the number of linear inequalities needed to determine whether
a state belongs to $\app_{m,n}$ grows exponentially with the dimension $\min\{m,n\}$. Hence, in general, it is a difficult task to determine an AP state in higher-dimensional systems. In the following lemma, we propose two types of AP states based on matrix analysis techniques.

\begin{lemma}
	\label{twoap}
	For any $m,n\ge 2$, the (unnormalized) states
	\begin{eqnarray}
	&&\r_1=\diag(\sqrt{2}+1,\sqrt{2}+1,1,\cdots,1)\in \app_{m,n},\\
	&&\r_2=\diag(2,2,2,1,\cdots,1)\in \app_{m,n}.
	\end{eqnarray}
\end{lemma}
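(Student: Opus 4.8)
The plan is to verify the defining condition of $\app_{m,n}$, namely that the given diagonal states remain PPT under every global unitary. Since both $\r_1$ and $\r_2$ are diagonal with a very sparse spectrum (only two distinct eigenvalues, one of them with multiplicity two or three), I would hope to reduce the problem to the known characterization of $\as_{2,n}=\app_{2,n}$ via inequality \eqref{as2n}, or alternatively to the maximal-ball bound $\tr(\rho^2)\le \frac{1}{mn-1}$. The latter is tempting because it is dimension-uniform, but a quick check shows these states are \emph{not} inside the maximal ball in general (their purity is too large once $mn$ is moderate), so the maximal-ball route alone will not suffice. Hence the main tool should be the Hildebrand linear-matrix-inequality characterization of $\app_{m,n}$ from \cite{hildebrand2007positive}, combined with the observation that for a spectrum with so few distinct values the exponentially many inequalities collapse to just a handful of genuinely distinct ones.

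Concretely, I would proceed as follows. First, normalize: write $\r_1 = c_1\diag(\sqrt2+1,\sqrt2+1,1,\dots,1)$ with $N:=mn$ entries, so $c_1 = 1/(2\sqrt2+N)$, and similarly $\r_2 = c_2\diag(2,2,2,1,\dots,1)$ with $c_2 = 1/(N+3)$. Second, recall that by \cite{hildebrand2007positive} a state with ordered eigenvalues $\lambda_1\ge\dots\ge\lambda_N$ lies in $\app_{m,n}$ (with $m\le n$) iff a certain family of $m\times m$ matrices built from the $\lambda_i$ — indexed by how one interleaves the largest eigenvalues against the smallest — are all positive semidefinite. Because our spectra take only two values, the only data entering these matrices are the multiplicity $\mu$ of the large eigenvalue (so $\mu=2$ or $3$) and the value ratio; the worst-case inequality will be the one that pits the large eigenvalues against the small ones most aggressively, i.e. the $2\times 2$ (or relevant small) principal minor of the form $\lambda_1\lambda_N \ge \text{(something built from the remaining }\lambda_i)$, or in the $m=2$ case precisely \eqref{as2n}. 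Third, I would check that inequality directly by substitution. For $\r_1$: the critical inequality should reduce, after cancelling the common normalization, to something like $(\sqrt2+1)\cdot 1 \le 1 + 2\sqrt{1\cdot 1}$ in the extreme interleaving — and indeed $\sqrt2+1\le 3$ — with all other interleavings only easier because they replace one of the factors on the left by a smaller value or add positive slack on the right. For $\r_2$: analogously one wants $2\le 1 + 2\sqrt{1\cdot 1}=3$ in the worst case. The choice of the constants $\sqrt2+1$ and $2$, and the multiplicities $2$ and $3$, is clearly engineered so that these extremal inequalities are satisfied (with equality-type tightness in the $2$-qubit-like corner), which is why the lemma is stated with exactly these numbers.

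The main obstacle I anticipate is \emph{not} the arithmetic but the bookkeeping: making rigorous the claim that among the exponentially many Hildebrand inequalities the single "most aggressive interleaving" dominates, uniformly in $m,n$. The clean way to handle this is a monotonicity argument: in each Hildebrand matrix, replacing any of the "small" slots by the value $1$ and any "large" slot by $\sqrt2+1$ (resp. $2$) and then checking PSD-ness of the resulting two-parameter matrix; one shows each such matrix is a sum of a PSD matrix and a manifestly-PSD correction, so reducing to the two cases $\mu=2$ (or $3$) suffices, and within those to the $m=2$ instance governed by \eqref{as2n}. If instead the Hildebrand formulation proves cumbersome to invoke cleanly for general $m,n\ge 3$, a fallback is to exhibit an explicit separable — hence PPT — decomposition of $U\r_i U^\dagger$ by noting $\r_i$ differs from a multiple of the identity by a rank-$2$ (resp. rank-$3$) positive perturbation and applying the result that $\proj{\psi}+\tfrac12 I$ is separable \cite{Vidal1999Robustness} iteratively, though this gives AS rather than merely AP and may need a slightly larger identity shift; since only AP is claimed, the LMI route is the more economical one and I would make that the primary argument.
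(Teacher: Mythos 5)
Your overall strategy --- invoke the Hildebrand LMI characterization of $\app_{m,n}$ and argue that for a two-valued spectrum only one ``worst'' inequality matters --- is exactly where the genuine gap sits. The entire content of the lemma for $m,n\ge 3$ is the claim that the exponentially many LMIs reduce to something checkable, and you never actually write down a single Hildebrand matrix for $m\ge 3$, nor do you prove the asserted monotonicity (``each such matrix is a sum of a PSD matrix and a manifestly-PSD correction''). In particular, the reduction to the $m=2$ inequality (\ref{as2n}) cannot be right as stated: for $m\ge 3$ the Hildebrand matrices are larger than $2\times 2$ and carry differences such as $\lambda_{mn-1}-\lambda_1$, $\lambda_{mn-3}-\lambda_2$ in off-diagonal slots, so their positive semidefiniteness is not controlled by any single scalar inequality of the form $\lambda_1\le\lambda_{2n-1}+2\sqrt{\lambda_{2n-2}\lambda_{2n}}$. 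For example, for $m=n=3$ and $\r_2$ the relevant matrix is proportional to $\bigl(\begin{smallmatrix}2&-1&-1\\-1&2&-1\\-1&-1&2\end{smallmatrix}\bigr)$, which is PSD but only barely (it has a zero eigenvalue), so the claim that ``all other interleavings are only easier'' needs an actual proof rather than a plausibility argument. As written, the proposal is a plan whose hardest step is left open; your fallback via iterated separable decompositions is likewise not developed.

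For comparison, the paper bypasses the LMI machinery entirely: $\rho\in\app_{m,n}$ iff $\tr\bigl(U\rho U^\dg\cdot\proj{x}^\G\bigr)\ge 0$ for every unitary $U$ and every unit vector $\ket{x}$, and by the trace inequality of Lemma \ref{ineqe} (iv) this quantity is bounded below by $\sum_j\lambda_j(\rho)\,\lambda_{mn+1-j}(\proj{x}^\G)$. Since Lemma \ref{ept} gives the spectrum of $\proj{x}^\G$ explicitly --- the two most negative eigenvalues are $-x_1x_2$ and $-x_1x_3$, and the third is either $-x_2x_3$ or $-x_1x_4$ --- this lower bound becomes an explicit sum of squares for $\r_1$, and a two-case computation for $\r_2$. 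If you want to salvage your route you must either carry out the Hildebrand reduction in full for all $m,n$, or switch to this eigenvalue-pairing argument.
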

\begin{proof}
It is known from \cite{hildebrand2007positive} that $\app_{m,n}=\app_{n,m}$. Hence we can assume that $m\le n$. To prove that $\r_1\in \app_{m,n}$, it suffices to prove that $\tr((U\r_1 U^\dg)^\G\cdot\proj{x})=\tr((U\r_1 U^\dg)\cdot\proj{x}^\G)\ge 0$ holds for any global unitary matrix $U$ and unit vector
$\ket{x}$. Denote the Schmidt coefficients of $\ket{x}$ as $x_1\ge \cdots\ge x_m$. Using Lemma \ref{ineqe} (iv), we have
\begin{eqnarray}
	\label{kk1}
	\notag
&&\tr((U\r_1 U^\dg)\cdot\proj{x}^\G)\ge \sum_{j=1}^{mn}\l_j(\r_1)\cdot\l_{mn+1-j}(\proj{x}^\G)\\
=&&(\frac{\sqrt{2}}{2}x_1-x_2)^2+(\frac{\sqrt{2}}{2}x_1-x_3)^2+x_4^2+\cdots+x_m^2\ge 0, 
\end{eqnarray}
where the equality follows from Lemma \ref{ept} that
$\l_{mn}(\proj{x}^\G)=-x_1 x_2$ and $\l_{mn-1}(\proj{x}^\G)=-x_1 x_3$. 
This proves that $\r_1\in \app_{m,n}$.

To continue, for a given $\ket{x}$, the value of $\l_{mn-2}(\proj{x}^\G)$ can be either $-x_2x_3$ or $-x_1x_4$. If $\l_{mn-2}(\proj{x}^\G)=-x_2x_3$, then
\begin{eqnarray}
	\label{zym1}
	\notag
	&&\sum_{j=1}^{mn}\l_j(\r_2)\cdot\l_{mn+1-j}(\proj{x}^\G)\\=&&(\frac{\sqrt{2}}{2}x_1-\frac{\sqrt{2}}{2}x_2)^2+(\frac{\sqrt{2}}{2}x_1-\frac{\sqrt{2}}{2}x_3)^2+(\frac{\sqrt{2}}{2}x_2-\frac{\sqrt{2}}{2}x_3)^2
	+x_4^2+\cdots+x_m^2\ge 0.
\end{eqnarray}
If $\l_{mn-2}(\proj{x}^\G)=-x_1x_4$, then
\begin{eqnarray}
	\label{zym2}
	\sum_{j=1}^{mn}\l_j(\r_2)\cdot
 \l_{mn+1-j}(\proj{x}^\G)&&=1-x_1(x_2+x_3+x_4)\ge 1-x_1\sqrt{3(1-x_1^2)}\ge 1-\frac{\sqrt{3}}{2}>0.
\end{eqnarray}
We conclude from (\ref{zym1}) and (\ref{zym2}) that $\r_2\in \app_{m,n}$.
\end{proof}

Notice that it remains unclear so far whether the above two states belong to $\as_{m,n}$. In the following, we shall use the above lemma to obtain two spectral properties of DEWs.

\section{Spectral properties of entanglement witnesses}
\label{sec:res=eigenvalues}
In this section, we characterize several spectral properties of EWs. 
We will examine both DEWs and NDEWs separately. First, we establish several fundamental properties common to both categories.

\begin{lemma}
	\label{use}
	Let $W$ be an arbitrary normalized $m\times n$ EW. Then
	
    (i) $\tr(W^2)\in (\frac{1}{mn-1},1]$.

	(ii) $\l_{mn}\in [-\frac{1}{2},0)$. If $\l_{mn}=-\frac{1}{2}$, then W is optimal. 
	
	(iii)  $\l_1\in (\frac{1}{mn-1},1)$.

	(iv) For $m=2$: (a) $\l_2+\l_{2n}\ge 0$, (b) $\sum_{i=3}^{2n} \l_i\ge -\frac{1}{2+2\sqrt{2}}$ and (c)
	$\sum_{i=k}^{2n} \l_{i}\ge -\frac{1}{2}$ for any $4\le k\le 2n$.
	
	(v) For $m=n$, $\cN(W)\le \frac{m-1}{2}$. If the inequality is saturated, then $W$ is optimal and has eigenvalues $\frac{1}{m}$ with multiplicity $\frac{m(m+1)}{2}$, and $-\frac{1}{m}$ with multiplicity $\frac{m(m-1)}{2}$. 
	\end{lemma}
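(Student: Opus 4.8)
The common engine behind all five parts is the following consequence of Lemma \ref{ineqe}(iv): if $\rho\in\as_{m,n}$ has spectrum $\mu_1\ge\cdots\ge\mu_{mn}$, then for every block-positive $W$ one has $\sum_{j=1}^{mn}\lambda_j(W)\,\mu_{mn+1-j}(\rho)\ge0$, because the unitary rotation of $\rho$ that anti-aligns with $W$ stays separable and has nonnegative expectation against $W$. Choosing $\rho$ cleverly turns this into linear inequalities on partial sums of the $\lambda_j$. I would also lean repeatedly on: an EW has at least one negative eigenvalue; Weyl's monotonicity ($A\le B\Rightarrow\lambda_i(A)\le\lambda_i(B)$); and the random-robustness fact underlying $\z_2$ in (\ref{311}), namely that $\proj\psi+\tfrac12 I_{mn}$ is separable for every unit $\ket\psi$.

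\emph{Parts (i) and (iii).} The lower bounds are elementary and do not need AS states. Since $\lambda_{mn}<0$, Cauchy--Schwarz on $\lambda_1,\dots,\lambda_{mn-1}$ gives $\sum_{j<mn}\lambda_j^2\ge(1-\lambda_{mn})^2/(mn-1)>1/(mn-1)$, hence $\tr(W^2)>\tfrac1{mn-1}$; and $(mn-1)\lambda_1\ge\sum_{j<mn}\lambda_j=1-\lambda_{mn}>1$ gives $\lambda_1>\tfrac1{mn-1}$. For the upper bound $\tr(W^2)\le1$ I would dualize the maximal-ball result \cite{gurvits2002largest}: every traceless Hermitian $H$ with $\|H\|_2$ at most the maximal-ball radius $(mn(mn-1))^{-1/2}$ makes $\tfrac{I}{mn}+H$ a genuine state (the trace-zero constraint forces positivity exactly at that radius) with $\tr\!\big((\tfrac{I}{mn}+H)^2\big)\le\tfrac1{mn-1}$, hence an AS state, hence separable; so $\tr(W(\tfrac{I}{mn}+H))\ge0$, i.e. $\tr(WH)\ge-\tfrac1{mn}$, and taking $H$ antiparallel to $W-\tfrac{I}{mn}$ yields $\|W-\tfrac{I}{mn}\|_2^2\le1-\tfrac1{mn}$, i.e. $\tr(W^2)\le1$ (equality attained by the normalized partial transpose of a maximally entangled state). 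The bound $\lambda_1\le1$ comes the same way from the fact that $\tfrac1{mn-1}(I-\proj\psi)\in\as_{m,n}$ for every unit $\ket\psi$; strictness is then free, since $\lambda_1=1$ together with $\tr(W^2)\le1$ forces all other eigenvalues to vanish, so $W=\proj\psi\ge0$, contradicting the existence of a negative eigenvalue.

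\emph{Parts (ii), (iv), (v).} For (ii): testing $\proj v+\tfrac12 I_{mn}$ against $W$ with $\ket v$ a minimal eigenvector gives $\lambda_{mn}+\tfrac12\ge0$; and if $\lambda_{mn}=-\tfrac12$ but $W$ were not optimal, then $W-P$ is block-positive for some $0\ne P\ge0$, so the same test gives $\lambda_{mn}(W-P)\ge-\tfrac12(1-\tr P)>-\tfrac12$, contradicting $\lambda_{mn}(W-P)\le\lambda_{mn}(W)=-\tfrac12$. For (v): the state $\z_1$ of (\ref{zzz1}) with $l$ entries equal to $\tfrac{m+1}{m-1}$ lies in $\as_{m,m}$ for every $l$, and the engine inequality reads $\sum_{j=m^2-l+1}^{m^2}\lambda_j\ge-\tfrac{m-1}2$; with $l$ the number of negative eigenvalues this is $\cN(W)\le\tfrac{m-1}2$. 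If equality holds then $\tr(W_+)=\tfrac{m+1}2$, $\tr(W_-)=\tfrac{m-1}2$; feeding $\tr(W_\pm^2)\ge(\tr W_\pm)^2/r_\pm$ (with $r_\pm$ the numbers of positive/negative eigenvalues, $r_++r_-\le m^2$) into $\tr(W^2)\le1$ from part (i) and minimizing over $r_\pm$ forces $r_+=\tfrac{m(m+1)}2$, $r_-=\tfrac{m(m-1)}2$, no zero eigenvalues, and all positive (resp.\ negative) eigenvalues equal to $\tfrac1m$ (resp.\ $-\tfrac1m$); optimality holds because the separable state realizing $\tr(W\,\cdot\,)=0$ here is a full-rank unitary conjugate of $\z_1$. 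For (iv)(b),(c) I would plug boundary states of $\as_{2,n}$ into (\ref{as2n}): spectrum $(c,\dots,c,d,d)$ with $c=(3+2\sqrt2)d$ gives $\sum_{i\ge3}\lambda_i\ge-\tfrac1{2+2\sqrt2}$, and spectrum $(c,\dots,c,d,\dots,d)$ with $k-1$ copies of $d$ and $c=3d$ gives $\sum_{i\ge k}\lambda_i\ge-\tfrac12$ for $k\ge4$. For (iv)(a) I would compress $W$ onto $\bbC^2\otimes V$, where $V$ is the two-dimensional $\bbC^n$-side Schmidt span of a minimal eigenvector: by Lemma \ref{ineqe}(v) this compression $\tilde W\in\bp_{2,2}$ has $\lambda_4(\tilde W)=\lambda_{2n}(W)$ and $\lambda_2(\tilde W)\le\lambda_2(W)$, so (a) reduces to $\lambda_2(\tilde W)+\lambda_4(\tilde W)\ge0$ for $2\times2$ block-positive matrices, which holds since these are decomposable and the unique negative eigenvalue of the partial transpose of a two-qubit positive semidefinite matrix is dominated in magnitude by its second-largest eigenvalue.

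\emph{Main obstacle.} I expect the non-routine points to be: (a) the upper bound $\tr(W^2)\le1$, which rests on the precise calibration of the maximal ball rather than on any elementary spectral manipulation, and on which parts (i), (iii) and (v) all rely; and (b) the equality analyses — pinning down the exact spectrum in part (v) through the Cauchy--Schwarz/counting optimization, and establishing the two-qubit partial-transpose estimate behind (iv)(a).
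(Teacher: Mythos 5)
Your proof is correct and, for the most part, runs on the same engine as the paper's: rotate a suitably chosen absolutely separable (or absolutely PPT) state so that its spectrum anti-aligns with that of $W$, apply Lemma \ref{ineqe}(iv), and read off linear inequalities on partial sums of eigenvalues; then use Cauchy--Schwarz and the counting optimization over $r_\pm$ for the equality case of (v), and compression plus interlacing to reduce (iv)(a) to the two-qubit case. Your choices of test states for (iv)(b),(c) and (v) coincide with the paper's $\g$ and $\z_1$. Two of your sub-arguments are genuinely different and both work. First, for $\tr(W^2)\le 1$ the paper simply invokes the inequality $\tr(W^2)\le\tr(W)^2$ for block-positive matrices from \cite{szarek2008geometry}, whereas you rederive it by dualizing the Gurvits--Barnum maximal ball: your claim that every traceless $H$ with $\|H\|_2\le (mn(mn-1))^{-1/2}$ makes $\tfrac{I}{mn}+H$ positive semidefinite is correct (the extremal spectrum is $(-\tfrac{1}{mn},\tfrac{1}{mn(mn-1)},\dots)$), and optimizing the direction of $H$ gives $\|W-\tfrac{I}{mn}\|_2^2\le 1-\tfrac{1}{mn}$. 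This buys a self-contained proof from a result already quoted in the preliminaries, at the cost of a slightly longer argument. Second, for optimality when $\l_{mn}=-\tfrac12$, the paper goes through the spanning property of the full-rank separable state $U^\dg\z_2U$, while you argue directly that $W-P$ block-positive with $P\ge 0$ nonzero would force $\l_{\min}(W-P)\ge -\tfrac12\tr(W-P)>-\tfrac12$, contradicting Weyl monotonicity; this is cleaner, though you should note the small check that $\tr(W-P)>0$ (if it were zero, Lemma \ref{proj} would force $W=P\ge0$).

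The one place where you assert something without proof is the two-qubit base case of (iv)(a): your statement that the negative eigenvalue of $Q^\G$ for a two-qubit $Q\ge 0$ is dominated in magnitude by $\l_2(Q^\G)$ is true but not elementary — it does not follow from the majorization in Lemma \ref{ineqe}(i) nor from an obvious absolutely separable test state (a spectrum of the form $(c,c,d,0)$ fails criterion (\ref{as2n})). The paper handles this by citing \cite[Theorem 3]{johnston2018inverse}, and your reduction via decomposability of two-qubit EWs is equivalent to that citation, so this is a gap of attribution rather than of substance; just be aware that the claim needs an external reference rather than the quick spectral manipulations used elsewhere in your argument.
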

\begin{proof}
 (i) From \cite{szarek2008geometry}, any block-positive matrix $W$ satisfies that
    \begin{eqnarray}
    	\label{xp}
    \tr(W^2)\le	\tr(W)^2.
    \end{eqnarray}
 So we have $\tr(W^2)\le 1$. Further, since $\l_{mn}<0$ by definition of EWs, we have
   $\tr(W^2)\ge \frac{1}{mn-1}(1-\l_{mn})^2+\l_{mn}^2>\frac{1}{mn-1}$,
where the first inequality follows from the Cauchy-Schwarz inequality. Thus the claim holds.

	(ii) The inequality $\l_{mn}<0$ holds by definition. Let $U$ be the order-$mn$ unitary matrix such that 
	\begin{eqnarray}
		\label{UDIAG}
	UWU^\dg=\diag(\l_{mn},\cdots,\l_1).
	\end{eqnarray}
	Since $\z_2\in \as_{m,n}$ from (\ref{311}), we have
	$\tr(UWU^\dg\cdot\z_2)=\frac{1}{mn+2}(2\l_{mn}+1)=\tr(W\cdot U^\dg\z_2 U)\ge 0$, as $U^\dg \z_2 U$ is separable. This implies that $\l_{mn}\ge -\frac{1}{2}$. Suppose $\l_{mn}=-\frac{1}{2}$.
	Then we have $\tr(W\cdot U^\dg\z_2 U)=0$, where $U^\dg \z_2 U$ is a full-rank separable state. This implies that $W$ possesses the spanning property, and thus is optimal.

(iii) If $\l_1\le \frac{1}{mn-1}$, then the normalized condition implies that $\l_{mn}\ge 0$, which contradicts with the definition of EWs. The inequality $\l_1<1$ directly follows from (\ref{xp}).

(iv) We first prove (a). For the two-qubit case, the result follows from \cite[Theorem 3]{johnston2018inverse}. For $2\times n$ ($n>2$) systems, assume that 
 there exists a EW $W$ such that $\l_2(W)+\l_{2n}(W)<0$. Up to local unitary equivalence, we can assume the eigenvector corresponding to $\l_{2n}(W)$ is $\cos t \ket{11}+\sin t \ket{22}$ with $t\in(0,\pi/2)$. 
 Projecting $W$ onto the subspace spanned by $\{\ket{1},\ket{2}\}\otimes \{\ket{1},\ket{2}\}$, we obtain a $2\times 2$ block-positive matrix $W'$. One can verify that $\l_4(W')=\l_{2n}(W)<0$. Thus $W'$ is a two-qubit EW.
Further, using Lemma \ref{ineqe} (v), we have $\l_2(W')\le \l_2(W)$. This implies that $\l_2(W')+\l_4(W')<0$, which leads to a contradiction. Thus the claim holds.

We next prove (b). According to (\ref{as2n}), the (unnormalized) state $\g:=\diag(3+2\sqrt{2},\cdots,3+2\sqrt{2},1,1)\in \as_{2,n}$. 
Combining with the expression (\ref{UDIAG}), we have $\tr(UWU^\dg\cdot\g)\ge 0$, from which the desired result follows. 

Finally, the proof of (c) is similar to that of (b), noting that the (unnormalized) state $\diag(3,\cdots3,1,\cdots,1)\in \as_{2,n}$, where the multiplicity of eigenvalue 3 is $2n-k+1$, again by (\ref{as2n}).

(v) The proof is similar to that of (ii). Suppose $W$ has $m^2-k$ negative eigenvalues, where $2m-1\le k\le m^2-1$ follows from Lemma \ref{pv} (ii). By the expression of (\ref{UDIAG}), we have 
\begin{eqnarray}
	\label{fr}
	\tr(UWU^\dg \cdot\z_1)=1+\frac{2}{m-1}(\l_{m^2}+\cdots+\l_{k+1})=\tr(W\cdot U^\dg \z_1 U)\ge 0,
\end{eqnarray}
where $\z_1$ in (\ref{zzz1}) is absolutely separable for $l=m^2-k$.

Let $\cN(W)=\frac{m-1}{2}$. Then from (\ref{fr}) we have $\tr(W\cdot U^\dg \z_1 U)=0$, where $U^\dg \z_1 U$ is a full-rank separable state. This already implies that $W$ is optimal. On the other hand, we have $\sum_{i=1}^{k}\l_i=\frac{m+1}{2}$ and $\sum_{i=k+1}^{m^2}\l_i=\frac{1-m}{2}$.
Consequently,
\begin{eqnarray}
	\label{kkk}
	\tr(W^2)=\sum_{i=1}^{k}\l_i^2+\sum_{i=k+1}^{m^2}\l_i^2
	\ge\frac{(m+1)^2}{4k}+\frac{(m-1)^2}{4(m^2-k)},
\end{eqnarray}
where the inequality follows from the Cauchy-Schwarz inequality. By viewing $\tr(W^2)$ as a function of $k$ and computing its derivative function, one can verify that $\tr(W^2)$ attains mimimum when $k=\frac{m(m+1)}{2}$. This implies that $\tr(W^2)\ge 1$. Recalling from (i) that $\tr(W^2)\le 1$. We have $\tr(W^2)=1$. Hence $k=\frac{m(m+1)}{2}$, and 
the saturation condition of (\ref{kkk})
implies the desired results of eigenvalues. 
\end{proof}

{\bf Remark.} 
\begin{enumerate}
	\item The inequality (a) in (iv) does not hold for $m\ge 3$, even for the weaker inequality $\l_1+\l_{mn}\ge 0$.
	For example, consider the family of two-qutrit generalized Choi witness $W[0,1,1]$ \cite[(7.21)]{Chru2014Entanglement}, which has eigenvalues $(-2,1,1,1,1,1,1,1,1)$.
	
	\item 
	 In the following, we shall see that the inequality $\cN(W)\le \frac{m-1}{2}$ in (v) holds for $m\times n$ normalized DEWs where $m<n$. However, it is unclear whether this inequality holds for $m\times n$ NDEWs with $m<n$, as $\z_1$ in (\ref{zzz1}) is known to be absolutely separable only when $m=n$. 
\end{enumerate}

It is known that the set of deficient-rank entangled states encompasses many important families of entangled states studied over the past decades, such as 
the qubit-qudit and two-qutrit edge states \cite{horodecki1997separability}, extreme PPT entangled states \cite{chen2013properties}, as well as $n$-copy-undistillable two-qutrit NPT states of rank five \cite{chen2016distillability1}. In the following, we shall use these states to 
investigate some spectral properties of EWs.

\begin{lemma}
	\label{use2}
Any deficient-rank NPT (resp. PPT) entangled state can be detected by a normalized DEW (resp. NDEW) which is arbitrarily close to a pure state. 
Consequently, normalized $W$ exists for both DEWs and NDEWs such that $\l(W)$ is arbitrarily close to $(1,0,\cdots,0)$.
\end{lemma}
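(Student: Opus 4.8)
The statement has two parts: (1) every deficient-rank NPT (resp.\ PPT) entangled state $\sigma$ is detected by a normalized DEW (resp.\ NDEW) that is arbitrarily close to a pure state; and (2) consequently there exist normalized DEWs and NDEWs whose spectrum is arbitrarily close to $(1,0,\dots,0)$. The key structural input is that a deficient-rank state has $\cR(\sigma)\subsetneq \bbC^m\otimes\bbC^n$, so there is a nonzero vector orthogonal to $\cR(\sigma)$. The plan is to build the witness as a projector onto a carefully chosen one-dimensional space (or close to it), exploit the range criterion to locate a completely entangled direction, and then use a small perturbation to make the spectrum collapse to a single nonzero eigenvalue.

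First, consider the NPT case. Since $\sigma$ is NPT, $\sigma^\G$ has a negative eigenvalue; let $\ket{\psi}$ be a corresponding eigenvector with $\bra{\psi}\sigma^\G\ket{\psi}<0$. Then $W_0:=\proj{\psi}^\G$ satisfies $\tr(W_0\sigma)=\tr(\proj{\psi}\sigma^\G)=\bra{\psi}\sigma^\G\ket{\psi}<0$, and $W_0$ is block-positive (indeed a DEW) because $\bra{a,b}\proj{\psi}^\G\ket{a,b}=|\braket{a^*,b}{\psi}|^2\ge 0$. Normalizing $W_0$ (note $\tr(W_0)=\tr(\proj{\psi})=1$, so it is already trace-one), we get a normalized DEW equal to the partial transpose of a pure-state projector, hence with exactly the spectrum of $\proj{\psi}^\G$ described in Lemma~\ref{ept}. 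To make it \emph{arbitrarily close to a pure state} while still detecting $\sigma$, I would instead start from a product-vector-free consideration: pick $\ket{\psi}$ with small Schmidt rank, or more directly perturb. The cleaner route: the set of $\ket{\psi}$ with $\bra{\psi}\sigma^\G\ket{\psi}<0$ is open, so we may choose $\ket{\psi}$ with Schmidt coefficients $(a_1,\dots,a_m)$ where $a_1$ is close to $1$ and all other $a_j$ are close to $0$ — this is possible because the negative eigenspace of $\sigma^\G$ together with the condition of having a dominant Schmidt coefficient is generically satisfiable; if not, one takes a convex combination $W_\epsilon := (1-\epsilon)\proj{\phi}^\G + \epsilon\, \proj{\psi}^\G$ with $\ket{\phi}$ a near-product vector, renormalizes, and checks that for $\epsilon$ small the operator still detects $\sigma$ by continuity of $\tr(W_\epsilon\sigma)$, while $\l(W_\epsilon)$ is within $O(\epsilon)$ of $(1,0,\dots,0)$.

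For the PPT case, $\sigma$ is a deficient-rank PPT entangled state. By the range criterion it violates the condition that $\cR(\sigma)$ is spanned by product vectors $\ket{e_k,f_k}$ with $\ket{e_k^*,f_k}$ spanning $\cR(\sigma^\G)$; in fact, since $\sigma$ is entangled and PPT one can find (passing to an edge state obtained by subtracting product vectors, as in the discussion preceding Lemma~\ref{le:ndew-edge}) an NDEW $P+Q^\G-\epsilon I_{mn}$ detecting it, where $P,Q$ are the projectors onto $\cK(\sigma)$ and $\cK(\sigma^\G)$. The point is that $\sigma$ has deficient rank, so $\cK(\sigma)\ne 0$ and we can take $P$ to be a \emph{one-dimensional} projector $\proj{\eta}$ with $\ket{\eta}\in\cK(\sigma)$, plus $Q$ one-dimensional in $\cK(\sigma^\G)$; then $W:=\proj{\eta}+\proj{\eta'}^\G-\epsilon I_{mn}$ with a suitably small $\epsilon>0$ (guaranteed positive by the edge-state argument of Lemma~\ref{le:ndew-edge} applied after reducing to an edge state) detects $\sigma$, is block-positive, and is nondecomposable. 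After normalization, its spectrum consists of a dominant eigenvalue near $1-2\epsilon$ (if $\ket{\eta}\ne\ket{\eta'}$ these contribute two eigenvalues near $\tfrac12$ each; choosing instead $P=Q$ supported on a single common product-orthogonal vector, or simply keeping $P$ rank one and $Q=0$ when $\sigma^\G$ still has the needed structure) and the rest equal to $-\epsilon$ or small; letting $\epsilon\to 0$ and the rank of $P+Q^\G$ shrink to $1$ pushes $\l(W)$ to $(1,0,\dots,0)$. The delicate point here is ensuring that after shrinking $P+Q^\G$ to rank one we still have a \emph{nondecomposable} witness detecting $\sigma$ — this needs the edge-state reduction so that the kernels are genuinely "entangled directions", and it needs $\epsilon$ chosen in the window $(0,\inf_{\ket{c,d}}\bra{c,d}(P+Q^\G)\ket{c,d}]$ of Lemma~\ref{le:ndew-edge}.

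Finally, part (2) is immediate from part (1) together with the fact that deficient-rank NPT entangled states exist (e.g.\ rank-one NPT states, i.e.\ entangled pure states) and deficient-rank PPT entangled states exist (e.g.\ two-qutrit edge states, cited in the paragraph after Lemma~\ref{le:ndew-edge}): apply the construction to any such state and read off that the resulting normalized DEW / NDEW has spectrum within any prescribed distance of $(1,0,\dots,0)$. I expect the \textbf{main obstacle} to be the PPT case: one must simultaneously (a) keep the witness nondecomposable while making it nearly pure, (b) keep $\epsilon$ strictly positive, which forces an appeal to edge-state structure rather than working with an arbitrary deficient-rank PPT entangled state directly, and (c) control the multiplicities so the small eigenvalues all go to $0$; balancing the rank reduction of $P+Q^\G$ against the positivity margin $\epsilon$ is the crux, and I would handle it by first replacing $\sigma$ by an edge state it dominates (subtracting product vectors from its range), then invoking Lemma~\ref{le:ndew-edge} with minimal-rank $P,Q$ and a continuity argument in $\epsilon$.
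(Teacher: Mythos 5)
There is a genuine gap: your proposal never uses the deficient-rank hypothesis in the way that actually makes the lemma work, and both of your branches break down at the ``make it nearly pure'' step. In the NPT branch, your fallback $W_\epsilon=(1-\epsilon)\proj{\phi}^\G+\epsilon\,\proj{\psi}^\G$ with $\ket{\phi}$ a near-product vector does \emph{not} detect $\sigma$ for small $\epsilon$ ``by continuity'': as $\epsilon\to0$ one has $\tr(W_\epsilon\sigma)\to\bra{\phi}\sigma^\G\ket{\phi}$, which is generically nonnegative, so the negative contribution of order $\epsilon$ is swamped. Continuity works against you here. In the PPT branch, a witness of the form $\proj{\eta}+\proj{\eta'}^\G-\epsilon I_{mn}$ can never be close to a pure state after normalization (its trace is about $2$, so the top eigenvalue caps near $\tfrac12$), taking $Q=0$ destroys block-positivity because the orthogonal complement of a single vector always contains product vectors (Lemma \ref{pv}(i)), and, as the counterexample after Lemma \ref{le:ndew-edge} shows, an arbitrary rank-one choice of $P,Q$ annihilating an edge state need not yield a strictly positive $\epsilon$ at all.

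The missing idea is much simpler and is exactly where deficient rank enters: take \emph{any} normalized DEW (resp.\ NDEW) $W$ detecting $\sigma$ --- such a $W$ exists by definition of NPT (resp.\ PPT entangled) --- and pick a unit vector $\ket{\psi}$ with $\bra{\psi}\sigma\ket{\psi}=0$, which exists because $\sigma$ is rank-deficient. Set $W_1=(1-p)W+p\,\proj{\psi}$. Then $\tr(W_1\sigma)=(1-p)\tr(W\sigma)<0$ for every $p<1$, so $W_1$ detects $\sigma$ no matter how close $p$ is to $1$; $W_1$ is block-positive as a sum of a block-positive and a positive operator; adding a positive operator keeps a DEW decomposable, and an NDEW plus a positive operator that still detects a PPT entangled state is still nondecomposable. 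Letting $p\to1$ gives $W_1\to\proj{\psi}$, and Lemma \ref{ineqe}(iii) forces $\l(W_1)\to(1,0,\dots,0)$. Note that you mix with the plain projector $\proj{\psi}$, not a partial transpose, and that $\ket{\psi}$ lives in the kernel of $\sigma$ rather than in a negative eigenspace of $\sigma^\G$; this single observation replaces all of the edge-state machinery you invoke.
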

\begin{proof}
Given a deficient-rank NPT (resp. PPT) entangled state $\r$, there exists a normalized DEW (resp. NDEW) $W$ detecting it. Since $\r$ does not have full rank, there exists a pure state $\ket{\psi}$ such that $\bra{\psi}\r\ket{\psi}=0$. Let $W_1=(1-p)W+p\proj{\ps}$ with $p\in(0,1)$. We have $W_1$ is also a DEW (resp. NDEW) since it remains block-positive and detects the entanglement of $\r$. Thus the claim holds by letting $p\ra1$.
Consequently, by Lemma \ref{ineqe} (iii), we have $\l(W_1)$ can be arbitrarily close to $\l(\proj{\psi})=(1,0,\cdots,0)$.
\end{proof}

We now state our first main theorem that focuses on DEWs only.
The proof is contained in Appendix \ref{proof:pt1012}.

\begin{theorem}
	\label{th:lambdaIN[-1/2,1)}

Let $W$ represent an element of $m\times n$ normalized DEWs ($m\le n$). 

(i) For each $x\in (\frac{1}{mn-1},1]$, there exists a $W$ such that $\tr(W^2)=x$. The infimum of $\tr(W^2)$ is $\frac{1}{mn-1}$ and not attainable. The supremum of $\tr(W^2)$ is 1 and attainable.

(ii) $\tr(W^2)=1$ if and only if $W$ is the partial transpose of a pure entangled state.

(iii) For each $x\in [-\frac{1}{2},0)$, there exists a $W$ such that $\l_{mn}(W)=x$. The infimum of $\l_{mn}$ is $-\frac{1}{2}$ and 
attainable. The supremum of $\l_{mn}$ is $0$ and not
attainable.

(iv)
$\l_{mn}(W)=-\frac{1}{2}$ if and only if $W$ is the partial transpose of a pure state whose two nonzero Schmidt coefficients are both $\frac{\sqrt{2}}{2}$. 
	
(v) For each $x\in (\frac{1}{mn-1},1)$, there exists a $W$ such that $\l_1(W)=x$. The infimum of $\l_1(W)$ is $\frac{1}{mn-1}$ and not attainable. The supremum of $\l_1(W)$ is 1 and not attainable.

(vi) $0<\cN(W)\le \frac{m-1}{2}$. For each $x\in (0,\frac{m-1}{2}]$, there exists a $W$ such that $\cN(W)=x$. The supremum of $\cN(W)$ is $\frac{m-1}{2}$ and attainable. The infimum of $\cN(W)$ is 0 and not attainable.

(vii) For the integer $1\le j\le \lfloor\frac{n}{m}\rfloor$, we define the maximally entangled state $\ket{\Phi_j}:=\frac{1}{\sqrt{m}}\sum_{i=1}^m \ket{i,i+(j-1)m}\in \bbC^m\otimes \bbC^n$. Then $\cN(W)=\frac{m-1}{2}$ if and only if under local unitary equivalence, $W$ can be written as the convex combination of $\proj{\Phi_1}^\G,\cdots,\proj{\Phi_k}^\G$, where $1\le k\le \lfloor \frac{n}{m} \rfloor$.

(viii) Let $m=2$. The infimum of $\sum_{i=3}^{2n} \l_i$ is $-\frac{1}{2+2\sqrt{2}}$, which is attained by $(x_1\ket{11}+x_2\ket{22})
(x_1\bra{11}+x_2\bra{22})^\G$, where $x_1^2+x_1x_2=1-\frac{1}{2+2\sqrt{2}}$ (numerically, $x_1=0.92388, x_2=0.382683$). For any $4\le k\le 2n$, the infimum of 
$\sum_{i=k}^{2n} \l_{i}$ is $-\frac{1}{2}$, which can be attained by the partial transpose of a qubit-qudit maximally entangled state.

(ix) For $3\le m\le n$, the infima of $\l_{mn}+\l_{mn-1}$ and  $\l_{mn}+\l_{mn-1}+\l_{mn-2}$
are $-\frac{\sqrt{2}}{2}$ and $-1$ respectively. The first infimum can be attained by the partial transpose of a pure state whose nonzero Schmidt coefficients are $\frac{\sqrt{2}}{2},\frac{1}{2},\frac{1}{2}$.  The second can be attained by the partial transpose of a two-qutrit maximally entangled state. 
\end{theorem}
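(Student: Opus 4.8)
The plan is to prove Theorem~\ref{th:lambdaIN[-1/2,1)} part by part, exploiting throughout the fact that a DEW is optimal precisely when it has the form $\proj{\psi}^\G$ for some entangled pure $\ket{\psi}$ (Lemma~\ref{opdc} together with the extremality characterization of \cite{marciniak2010extremal}), so that the extreme spectral behavior of DEWs is governed entirely by the partial transposes of pure states, whose spectra are given explicitly by Lemma~\ref{ept}. The general strategy for each ``infimum/supremum attained or not'' claim is: (a) produce a one-parameter family of DEWs realizing every value in the stated open/half-open interval, typically of the form $(1-p)W_0+p\proj{\psi}$ or a convex combination of partial transposes of pure states with tunable Schmidt coefficients; (b) prove the bound itself using Lemma~\ref{use} (which already gives $\tr(W^2)\in(\tfrac{1}{mn-1},1]$, $\l_{mn}\in[-\tfrac12,0)$, $\l_1\in(\tfrac{1}{mn-1},1)$, $\cN(W)\le\tfrac{m-1}{2}$ for $m=n$); and (c) for non-attainability, argue that attainment would force $W$ to be a pure-state partial transpose with a forbidden spectrum, while for attainability exhibit the explicit witness.

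Concretely: parts (i)–(ii) follow by noting $\tr(W^2)=1$ forces equality in $\tr(W^2)\le\tr(W)^2$, and by \cite{szarek2008geometry} (or a direct argument) equality there characterizes partial transposes of rank-one operators; block-positivity then forces it to be $\proj{\psi}^\G$ with $\ket{\psi}$ entangled, and conversely $\tr(\proj{\psi}^{\G2})=\tr(\proj{\psi}^2)=1$; the interior values come from interpolating between such a witness and, say, $\z_2^\G$-type states, with $\tr(W^2)\to\tfrac{1}{mn-1}$ along states approaching the maximally mixed state. Parts (iii)–(iv) use Lemma~\ref{ept}: $\l_{mn}(\proj{\psi}^\G)=-a_1a_2$ where $a_1\ge a_2$ are the top Schmidt coefficients, and $-a_1a_2\ge-\tfrac12$ with equality iff $a_1=a_2=\tfrac{\sqrt2}{2}$ (and no other nonzero coefficients), so the partial transpose of the two-qubit maximally entangled state achieves $-\tfrac12$; combined with Lemma~\ref{use2} to sweep $\l_{mn}$ through $[-\tfrac12,0)$, and the optimality remark in Lemma~\ref{use}(ii) to see $0$ is not attained. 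Parts (vi)–(vii): the upper bound $\cN(W)\le\tfrac{m-1}{2}$ for $m<n$ must first be extended beyond the $m=n$ case of Lemma~\ref{use}(v) — one embeds $\cM_m\otimes\cM_m$ inside $\cM_m\otimes\cM_n$, or uses the newly introduced AP/AS states; then $\cN(W)=\tfrac{m-1}{2}$ forces the eigenvalue pattern $\pm\tfrac1m$ with the stated multiplicities, and by Lemma~\ref{ept} a pure-state partial transpose has this spectrum iff the state is maximally entangled on some $m$-dimensional block, giving the $\ket{\Phi_j}$ classification via the orthogonality/block structure of Lemma~\ref{proj}; interior values follow again by interpolation.

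Parts (viii)–(ix) are the computational heart: here one must minimize $\sum_{i=k}^{mn}\l_i(\proj{\psi}^\G)$ over Schmidt vectors of $\ket{\psi}$. By Lemma~\ref{ept}, $\sum_{i=k}^{mn}\l_i$ is a sum of the most negative quantities among $\{-a_ia_j\}_{i<j}$; for $m=2$ there are only the values $a_1^2,a_2^2,\pm a_1a_2$, so $\sum_{i=3}^{2n}\l_i = a_2^2-a_1a_2$ (the second-smallest and smallest eigenvalues), and minimizing $a_2^2-a_1a_2$ subject to $a_1^2+a_2^2=1$, $a_1\ge a_2\ge0$ is a single-variable calculus problem giving $-\tfrac{1}{2+2\sqrt2}$ at the stated $(x_1,x_2)$; matching against the lower bounds in Lemma~\ref{use}(iv)(b),(c) shows these are the infima over \emph{all} DEWs, with attainment since these witnesses are partial transposes of pure states hence DEWs. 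For (ix) with $m\ge3$ one similarly identifies $\l_{mn}+\l_{mn-1}=-a_1(a_2+a_3)$ and minimizes over $\sum a_i^2=1$ (Lagrange multipliers, or AM–QM), obtaining $-\tfrac{\sqrt2}{2}$ at $(\tfrac{\sqrt2}{2},\tfrac12,\tfrac12)$; and $\l_{mn}+\l_{mn-1}+\l_{mn-2}$ equals either $-a_1a_2-a_1a_3-a_2a_3$ or $-a_1a_2-a_1a_3-a_1a_4$ depending on the ordering, each minimized at the $m=3$ maximally entangled state giving $-1$. The main obstacle I anticipate is twofold: first, establishing $\cN(W)\le\tfrac{m-1}{2}$ for genuinely rectangular DEWs ($m<n$) where the key absolutely separable state $\z_1$ of \eqref{zzz1} is only known to work for $m=n$ — this likely needs a separate AP-state construction or a reduction argument; and second, the combinatorial bookkeeping in (ix) of which triple of $\{-a_ia_j\}$ actually constitutes the three smallest eigenvalues across all Schmidt vectors, since the ordering of $a_2a_3$ versus $a_1a_4$ changes the objective and one must check the minimum over each branch and compare — but in each branch it reduces to a short constrained optimization that the lower bounds from Lemma~\ref{use} will match.
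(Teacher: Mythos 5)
Your plan rests on the premise that ``a DEW is optimal precisely when it has the form $\proj{\psi}^\G$,'' but that is the characterization of \emph{extremal} DEWs, not optimal ones: Lemma~\ref{opdc} only gives $W=Q^\G$ with $Q\ge 0$ supported on a CES, and $Q$ may have arbitrary rank. This creates a real gap in part (iv). Knowing from Lemma~\ref{use}(ii) that $\l_{mn}(W)=-\tfrac12$ forces optimality only puts you at $W=Q^\G=\sum_i p_i\proj{\phi_i}^\G$; the actual content of the paper's proof is to show $Q$ must be rank one, which it does by saturating the majorization inequality $\l(Q^\G)\prec\sum_i p_i\l(\proj{\phi_i}^\G)$ to force every $\ket{\phi_i}$ to have Schmidt vector $(\tfrac{\sqrt2}{2},\tfrac{\sqrt2}{2},0,\dots)$, then using the equality case of Lemma~\ref{ineqe}(ii) to force all the $\proj{\phi_i}^\G$ to share the same negative eigenvector, which confines $\cR(Q)$ to a $2\times2$ block where Lemma~\ref{pv}(i) (any $\ge2$-dimensional subspace of $\bbC^2\otimes\bbC^2$ contains a product vector) contradicts the CES property unless all $\ket{\phi_i}$ coincide. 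None of this is in your proposal. A related but milder issue affects (ii): equality in $\tr(W^2)\le\tr(W)^2$ does \emph{not} characterize pure-state partial transposes among block-positive operators (any product state $\proj{a,b}$ saturates it), so you cannot simply cite it; the paper instead writes $W=xP+(1-x)Q^\G$ and runs Cauchy--Schwarz on $\tr(PQ^\G)$ to force $P\propto Q^\G$.

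Two further gaps. For (vi) with $m<n$ you explicitly leave the bound $\cN(W)\le\tfrac{m-1}{2}$ open and guess at an embedding or AP-state route; the paper's argument needs neither: it uses $\cN(P+Q^\G)\le\cN(Q^\G)\le\sum_i p_i\,\cN(\proj{\phi_i}^\G)\le\tfrac{m-1}{2}$ via Lemma~\ref{ineqe}(ii), convexity of $\cN$, and $\cN(\proj{\phi}^\G)=\sum_{i<j}a_ia_j\le\tfrac{m-1}{2}$, valid for all $m\le n$. (The same saturation chain is what drives (vii), whose $m<n$ case is considerably more delicate than your one-line appeal to Lemma~\ref{proj}: one must track the negative eigenvectors block by block and induct.) For (ix), you minimize $\l_{mn}+\l_{mn-1}$ and the three-smallest sum only over pure-state partial transposes, and then appeal to ``lower bounds in Lemma~\ref{use}'' that do not exist there for $m\ge3$ --- Lemma~\ref{use}(iv) is a $2\times n$ statement. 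The paper's lower bound for general DEWs comes from pairing $W$ against the new absolutely PPT states $\r_1,\r_2$ of Lemma~\ref{twoap} via Lemma~\ref{ineqe}(iv) and decomposability. Your pure-state computation could in principle be upgraded to a full proof by a Ky Fan superadditivity plus convexity reduction (the same mechanism as in (vi)), which would be a genuinely different and arguably more elementary route than the AP states, but that reduction is nowhere stated in your proposal, so as written the lower bound in (ix) is unproven.
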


Our next theorem characterizes the similar properties for NDEWs with the proof given in Appendix \ref{ndd}. 
\begin{theorem}
	\label{NDEW}
	Let $W$ represent an element of $m\times n$ normalized  NDEWs ($m\le n$). Then
	
	(i) The supremum of $\tr(W^2)$ is 1.
	
	(ii) The infimum of $\tr(W^2)$ is not attainable.
	
    (iii) The supremum of $\l_{mn}$ is 0 and  not attainable.

    (iv) The infimum of $\l_{mn}$ is $-\frac{1}{2}$ and not attainable.

   (v) The supremum of $\l_1$ is 1 and not attainable.

	(vi) The infimum of $\l_1$ is not attainable.
	
	(vii) For $m=2$ and $m=n\ge 3$, the supremum of $\cN(W)$ is $\frac{m-1}{2}$.  
	
	(viii) For $m=n=3$, the supremum of $\cN(W)$ is not attainable.
	
	(ix) The infimum of $\cN(W)$ is $0$ and not attainable.
	\end{theorem}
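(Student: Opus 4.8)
The plan is to split the nine items into three groups. The first group---parts (i), (iii), (v) and (ix)---should fall out almost immediately from what is already available. By Lemma~\ref{use}(i),(iii) every normalized $m\times n$ EW has $\tr(W^2)\in(\tfrac1{mn-1},1]$ and $\l_1\in(\tfrac1{mn-1},1)$, by definition $\l_{mn}<0$, and $\cN(W)>0$ since $W$ has a negative eigenvalue; this already gives the ``not attainable'' clauses for the suprema $1$ of $\tr(W^2)$ and of $\l_1$, the supremum $0$ of $\l_{mn}$, and the infimum $0$ of $\cN$. To see that these are the exact suprema/infima I would invoke Lemma~\ref{use2}, which supplies normalized NDEWs whose spectrum is arbitrarily close to $(1,0,\dots,0)$, along which $\tr(W^2)\to1$, $\l_1\to1$, $\l_{mn}\to0$ and $\cN(W)\to0$.

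For parts (ii) and (vi) I would argue directly that no normalized NDEW minimizes $\tr(W^2)$ or $\l_1$. Given any normalized NDEW $W$ detecting a PPTES $\rho$, set $W_t=(1-t)W+\tfrac t{mn}I_{mn}$. This is a convex combination of normalized block-positive matrices, hence normalized and block-positive, and for $t>0$ small one still has $\tr(W_t\rho)<0$, so $W_t$ is again an NDEW. Since $\tr(W^2)>\tfrac1{mn-1}>\tfrac1{mn}$ and $\l_1(W)>\tfrac1{mn-1}>\tfrac1{mn}$, one computes $\tfrac{d}{dt}\tr(W_t^2)|_{t=0}=\tfrac2{mn}-2\tr(W^2)<0$ and $\tfrac{d}{dt}\l_1(W_t)|_{t=0}=\tfrac1{mn}-\l_1(W)<0$, so both quantities strictly decrease near $t=0$; hence neither infimum is attained (their exact values being left open, consistent with Table~\ref{ta:eigenvalues}).

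The third group---parts (iv), (vii) and (viii)---is the substantive one, and breaks into bounds, non-attainment obstructions, and approximating constructions. The bounds $\l_{mn}\ge-\tfrac12$ and $\cN(W)\le\tfrac{m-1}2$ (for $m=n$) are Lemma~\ref{use}(ii),(v); for $m=2$ I would first prove $\l_3(W)\ge0$ for every $2\times n$ EW by restricting $W$ to the $2\times2$ subspace determined by the (necessarily Schmidt-rank-$2$) $\l_{mn}$-eigenvector: the restriction $W'$ has $\l_4(W')=\l_{mn}(W)<0$, hence is a two-qubit EW with exactly one negative eigenvalue, and Lemma~\ref{ineqe}(v) gives $\l_3(W)\ge\l_3(W')\ge0$; then the negative eigenvalues of $W$ all have index $\ge4$, so $\cN(W)=-\sum_{i\ge4,\ \l_i<0}\l_i\le\tfrac12$ by Lemma~\ref{use}(iv)(c). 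For the non-attainment in (iv): if an NDEW had $\l_{mn}(W)=-\tfrac12$, then since $\proj v+sI$ is separable precisely for $s$ at least the product of the two largest Schmidt coefficients of the eigenvector $v$, one forces $v$ to have Schmidt coefficients $(\tfrac1{\sqrt2},\tfrac1{\sqrt2})$; writing the restriction of $W$ to the $2\times2$ subspace spanned by $v$ as $R+S^\G$ with $R,S\ge0$ (every $2\times2$ block-positive matrix is decomposable), the inequality $\tr(W)\ge\tr(W|_{2\times2})$ (the remaining diagonal entries of $W$ are expectations on product vectors) together with $\bra{v}S^\G\ket{v}\ge-\tfrac12\tr(S)$ forces $\tr(R)=0$, whence $R=0$, and then Lemma~\ref{proj} forces $W$ to be supported on that $2\times2$ block and equal to $S^\G$---a decomposable EW, a contradiction. (This reproves, for all EWs, the ``only if'' direction of Theorem~\ref{th:lambdaIN[-1/2,1)}(iv), which may instead simply be cited.) Part (viii) is analogous: by Lemma~\ref{use}(v) an NDEW with $\cN(W)=1$ in $3\times3$ has spectrum $(\tfrac13^{(6)},-\tfrac13^{(3)})$ and is optimal, and the argument underlying Theorem~\ref{th:lambdaIN[-1/2,1)}(vii) should pin $W$ down, up to local unitary equivalence, to $\tfrac13\,\mathrm{SWAP}=\proj{\Phi_1}^\G$, which is decomposable.

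Finally, to see that $-\tfrac12$ and $\tfrac{m-1}2$ are genuine limit points of the NDEW families in (iv) and (vii), I would perturb the extremal DEW realizing the bound---$\proj{\psi_0}^\G$ with $\psi_0$ a two-level maximally entangled state for (iv) (so $\l_{mn}(\proj{\psi_0}^\G)=-\tfrac12$), and $\proj{\Phi_1}^\G$ for (vii) (so $\cN(\proj{\Phi_1}^\G)=\tfrac{m-1}2$ by Theorem~\ref{th:lambdaIN[-1/2,1)}(vi),(vii))---toward an NDEW by setting $W_t=(1-t)\proj{\psi}^\G+tW_{\mathrm{ND}}$, where $W_{\mathrm{ND}}$ is an NDEW detecting a PPTES $\rho$ chosen with $\rho^\G\ket{\psi}=0$. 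Then $\tr(\proj{\psi}^\G\rho)=\bra{\psi}\rho^\G\ket{\psi}=0$, so $\tr(W_t\rho)=t\,\tr(W_{\mathrm{ND}}\rho)<0$ for all $t\in(0,1]$, making each $W_t$ a normalized NDEW, while $\l(W_t)\to\l(\proj{\psi}^\G)$ as $t\to0^+$ by Lemma~\ref{ineqe}(iii). The hard part will be the existence of such a PPTES $\rho$ in every admissible system ($mn>6$): one must exhibit a PPT entangled state whose partial transpose annihilates a vector of the prescribed Schmidt rank. I would obtain it by starting from a known edge state, using a dimension count on the relevant determinantal variety to locate, in the kernel of its partial transpose, a vector of the right Schmidt rank (e.g., in $3\times3$ the span of the five flipped Tiles-UPB product vectors contains a full-rank vector), and then applying a local $\SL$ transformation to move that vector onto $\ket{\psi}$ while preserving PPT-ness and entanglement; for larger systems a block-diagonal sum of $\proj{\psi}^\G$ with a rescaled NDEW on a complementary tensor block is an alternative. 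Carrying this existence argument through in all dimension regimes, together with the rigidity step needed for (viii), is where the bulk of the remaining work lies.
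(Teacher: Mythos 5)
Most of your proposal tracks the paper's actual proof. Parts (i), (iii), (v), (ix) via Lemma \ref{use} plus Lemma \ref{use2}, and parts (ii), (vi) via mixing with $\frac{1}{mn}I_{mn}$, are exactly the paper's arguments (your $W_t$ is the paper's $\frac{1}{1+\epsilon mn}(W+\epsilon I_{mn})$ reparametrized). Your non-attainment argument for (iv) is a legitimate variant: by invoking the sharp Vidal--Tarrach value $a_1a_2$ of the random robustness you force the eigenvector to have Schmidt coefficients $(\tfrac{1}{\sqrt2},\tfrac{1}{\sqrt2})$ at the outset and then run the $2\times2$ restriction/trace/Lemma~\ref{proj} argument; the paper only uses the weaker fact that $\proj{r}+\tfrac12 I$ is separable and therefore must treat Schmidt rank $k\ge3$ by a separate argument (the separability of $\proj{r}+\sum_{i\ne j}r_ir_j\proj{ij}$ forcing all diagonal entries of $W$ to vanish). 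Your $m=2$ bound in (vii) ($\l_3(W)\ge0$ via the $2\times2$ restriction, then Lemma \ref{use}(iv.c)) is also fine and parallel to the paper's count of negative eigenvalues via Lemma \ref{pv}(ii).

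There are, however, two genuine gaps. The serious one is (viii). You propose that ``the argument underlying Theorem~\ref{th:lambdaIN[-1/2,1)}(vii) should pin $W$ down'' to $\proj{\Phi_1}^\G$; but that argument starts from the decomposition $W=\sum_i q_i\proj{\psi_i}^\G$, i.e.\ it presupposes decomposability, which is precisely what must be ruled out here. What Lemma \ref{use}(v) gives you for an NDEW is only the spectrum $(\tfrac13^{(6)},-\tfrac13^{(3)})$, i.e.\ $W=\tfrac13(I_9-2P)$ with $P$ a rank-three projection all of whose range vectors have maximal Schmidt coefficient at most $\tfrac{\sqrt2}{2}$. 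The entire content of (viii) is the rigidity statement that such a $P$ must be (up to local unitaries) the antisymmetric projection, whence $W^\G\ge0$; the paper proves this by a delicate analysis of the coefficient matrices of three spanning vectors of $\cR(P)$ (Lemma \ref{o2} plus the singular-value constraints on $p_1A+p_2B+p_3C$), and explicitly records that the analogous statement for $m>3$ is open (Conjecture \ref{anti}). Your proposal contains no argument for this step. The second, smaller gap is the existence of the PPT entangled states needed for the approximating families in (iv) and (vii): you need a PPTES $\r$ with $\r^\G\ket{\psi}=0$ for a prescribed maximally entangled $\ket{\psi}$ in every relevant dimension, and you leave this to a dimension count and an unspecified local transformation. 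The paper instead exhibits explicit states ($\r_b$, $\r_a$, and local-unitary/partial-transpose images of the Clarisse state $\g$) and verifies the orthogonality by direct computation; note also that the paper's route to the infimum in (iv) does not even require exact orthogonality --- it uses the one-parameter edge-state families and lets $b\to1$ so that $\l_{\min}$ of the normalized kernel projector of $\r_b^\G$ tends to $-\tfrac12$. Until the existence step is carried out (or replaced by such explicit states), the lower-bound halves of (iv) and (vii) are not established.
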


We have seen from Theorem \ref{th:lambdaIN[-1/2,1)} (iii) and Theorem \ref{NDEW} (iv) that the infimum of $\l_{mn}$ can be attained by DEWs but not by NDEWs, which reveals a difference between their spectral properties. 
Nevertheless, in comparison to DEWs, the spectral properties of NDEWs remain less understood. In particular, while the infimums of $\lambda_1$ and $\tr(W^2)$ remain unknown, we observe that the infimum of $\lambda_1$ is $\frac{1}{mn-1}$ if and only if the infimum of $\tr(W^2)$ equals $\frac{1}{mn-1}$. Moreover, the supremum of $\mathcal{N}(W)$ for $m<n$ and whether it is attainable is still an open question. We prove in (viii) that the supremum of $\cN(W)$ for $m=n=3$ is not attainable. We also conjecture that this holds for arbitrary $m=n$. See the following conjecture.

\bcj 
\label{cnnb}
The supremum $\frac{m-1}{2}$ of $\cN(W)$ for normalized $m\times m$ NDEWs is not attainable.
\ecj
From Lemma \ref{use} (v), we have known that if $\cN(W)=\frac{m-1}{2}$, then $W$ is optimal and can be written as $W=\frac{1}{m}(I_{m^2}-P)$ where $P$ is the rank-$\frac{m(m-1)}{2}$ projection onto the negative eigenspace of $W$.The block-positivity of $W$ requires that 
$\bra{x,y}P\ket{x,y}\le \frac{1}{2}$ for any unit vectors $\ket{x},\ket{y}\in \bbC^m$. The optimality of $W$ requires that there exists unit vectors $\ket{x},\ket{y}\in \bbC^m$ such that
$\bra{x,y}P\ket{x,y}=\frac{1}{2}$. For $m=3$, we have demonstrated in the proof of Theorem \ref{NDEW} (viii) that $P$ is spanned by the three basis vectors of antisymmetric subspace in $\bbC^3\otimes \bbC^3$, under local unitary equivalence. We conjecture that this property extends to arbitrary dimensions.

\bcj
\label{anti}
If the maximal Schmidt coefficient of all unit vectors in a $\frac{m(m-1)}{2}$-dimensional subspace of $\bbC^m\otimes \bbC^m$ is at most $\frac{\sqrt{2}}{2}$, then the subspace is spanned by the antisymmetric basis under local unitary equivalence.
\ecj
Previous results have shown that the maximal Schmidt coefficient of any vector in antisymmetric space is at most $\frac{\sqrt{2}}{2}$ \cite[Proposition 1]{grudka2010constructive}. But as far as we know, the converse remains unproven. If this conjecture is true, then Conjecture \ref{cnnb} would also follow, since in that case one could directly conclude that $W^\G\ge 0$ when $\cN(W)=\frac{m-1}{2}$.

\section{Mirrored pair of EWs and Detecting Power of NDEWs}
\label{Power}
The detection capability of entanglement witnesses (EWs), which refers to the number and types of entangled states they can identify, is a crucial aspect to investigate. Previous results have indicated a fundamental limitation of the detectability of an EW without prior knowledge of the target states \cite{liu2022fundamental}. In general, for a given EW, its detection capability can be enhanced through optimization techniques, which essentially involve subtracting appropriate positive operators \cite{lewenstein2000optimization1}.
An alternative approach to enhance detection capability is provided by the framework of mirrored EWs \cite{bae2020mirrored}.
Given an $m\times n$ EW $W$, one defines a mirrored operator $W_M=\mu I_{mn}-W$, where $\mu$ is the smallest number such that $W_M$ is block-positive. Equivalently, we have $\mu=\sup_{\ket{a,b},\norm{a}=\norm{b}=1}\bra{a,b}W\ket{a,b}$.
If  $\l_{1}(W)>\mu$, then $W_M$ is a legitimate EW and $(W,W_M)$ constitutes a pair of EWs, which can double up the capability of detecting entangled states. Several examples of such mirrored pairs of EWs involving both DEWs and NDEWs have been demonstrated \cite{bae2020mirrored,bera2023structure}. Recently, a mirrored pair of NDEWs $(W,W_M)$ has been constructed, where $W$ and $W_M$ are both optimal \cite{chruscinski2025mirroredn}.
However, not every EW has a mirrored EW \cite{bera2023structure}. Consider an $m\times n$ extremal DEW $W=\proj{\psi}^\G$ for some pure entangled state $\ket{\psi}\in \bbC^m\otimes \bbC^n$ with Schmidt coefficients $s_1\ge \cdots\ge s_m\ge 0$. In this case, by using Lemma \ref{ept}, we have
\begin{eqnarray}
	\label{maxdef}
\mu=\sup_{\ket{a,b},\norm{a}=\norm{b}=1}\bra{a,b}(\proj{\psi}^\G)\ket{a,b}=s_1^2.
\end{eqnarray}
Consequently, $W_M=s_1^2 I_{mn}-\proj{\psi}^\G\ge 0$ cannot be an EW. This leads to the question of determining the conditions for a mirrored pair of EWs. Here we apply our established results from the previous section to provide two necessary conditions.

\begin{corollary}
	Given an $m\times n$ normalized EW $W$ ($m\le n$),
	a necessary condition for the mirrored operator $W_M$ to be an EW
	is that $\l_{mn}(W)>-\frac{1}{2}$. If $W$ is decomposable, then two additionally necessary conditions are $\tr(W^2)<1$ and $\cN(W)<\frac{m-1}{2}$.
	\end{corollary}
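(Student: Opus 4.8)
The statement collects together three necessary conditions for a normalized $m\times n$ EW $W$ to admit a mirrored partner $W_M$. Recall that $W_M = \mu I_{mn} - W$ with $\mu = \sup_{\ket{a,b},\,\norm{a}=\norm{b}=1}\bra{a,b}W\ket{a,b}$, and that $W_M$ is an EW precisely when $\l_1(W) > \mu$. The key structural observation I would use throughout is that $\mu$ is by definition the largest eigenvalue of $W$ restricted to the cone of product vectors; since $W$ is block-positive, $\mu \le \l_1(W)$ always, and the mirrored condition demands the strict inequality $\l_1(W) > \mu$. Crucially, because $W$ is block-positive, for {\em any} product vector $\ket{a,b}$ we have $\bra{a,b}W\ket{a,b}\ge 0$, so $\mu\ge 0$. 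Hence a necessary condition for $W_M$ to be an EW is $\l_1(W) > 0$, which is automatic, but more usefully: $W_M \ge 0$ would kill the witness property, and $W_M\ge 0$ happens exactly when $\mu \ge \l_1(W)$.

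For the first necessary condition, $\l_{mn}(W) > -\tfrac12$: I would argue by contraposition. Suppose $\l_{mn}(W) = -\tfrac12$ (it cannot be smaller by Lemma~\ref{use}(ii)). Then by Lemma~\ref{use}(ii), $W$ is optimal, meaning $W - P$ fails to be block-positive for every nonzero $P\ge 0$. Now observe that $W_M = \mu I_{mn} - W$ being block-positive with $\mu \ge 0$ would let us write, if $\mu > 0$, that $W_M = \mu I_{mn} - W$, equivalently $W = \mu I_{mn} - W_M$; but $\mu I_{mn}$ is a positive operator and $W_M$ is block-positive, so $W = \mu I_{mn} - W_M$ expresses $W$ as a positive operator minus a block-positive one — I need instead the reverse reasoning. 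The cleaner route: if $W$ is optimal then subtracting {\em any} positive operator destroys block-positivity; in particular, writing $W_M$ block-positive as $W_M = W' $ and using $W + W_M = \mu I_{mn}$, one gets $W = \mu I_{mn} - W_M \preceq$ (in the block-positive order) something that forces $\mu \ge \l_1(W)$, i.e.\ $W_M\ge 0$. Concretely: block-positivity of $W_M=\mu I - W$ means $\bra{a,b}W\ket{a,b}\le\mu$ for all product $\ket{a,b}$; combined with optimality of $W$ — which by Lemma~\ref{use}(ii) applies — one shows the spanning-type property forces $\l_1(W)\le\mu$. I would spell this out: $W$ having the spanning property (as in the $\l_{mn}=-1/2$ case) means its zero-eigenvectors among product vectors span the whole space, and on that span the max of $\bra{a,b}W\ket{a,b}$ over products already reaches $\l_1(W)$... this is the step that needs the most care, and is the main obstacle — getting the precise argument that optimality/spanning forces $\mu = \l_1(W)$ hence $W_M\ge 0$.

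For the two conditions under the additional hypothesis that $W$ is decomposable: by Theorem~\ref{th:lambdaIN[-1/2,1)}(ii), $\tr(W^2)=1$ iff $W = \proj{\psi}^\G$ for a pure entangled state $\ket{\psi}$; and by part (vii) of the same theorem, $\cN(W) = \tfrac{m-1}{2}$ iff (up to local unitaries) $W$ is a convex combination of partial transposes of orthogonal maximally entangled states $\proj{\Phi_j}^\G$. In the first case, equation~(\ref{maxdef}) of the excerpt already gives $\mu = s_1^2$ where $s_1$ is the largest Schmidt coefficient of $\ket{\psi}$, and by Lemma~\ref{ept} the largest eigenvalue of $\proj{\psi}^\G$ is $s_1^2$ as well (it is the $a_j^2$ term with $j$ minimal), so $\l_1(W) = s_1^2 = \mu$, whence $W_M = s_1^2 I_{mn} - \proj{\psi}^\G \ge 0$ and is not an EW. In the second case, $W = \sum_{j} p_j \proj{\Phi_j}^\G$; since the $\ket{\Phi_j}$ are orthogonal with supports on disjoint blocks, the eigenvalues of $W$ are just $\pm p_j/m$, so $\l_1(W) = \max_j p_j/m$; meanwhile $\mu = \sup_{\ket{a,b}}\bra{a,b}W\ket{a,b}$, and I would compute (using that $\bra{a,b}\proj{\Phi_j}^\G\ket{a,b} = \tfrac1m|\braket{a^*}{b\text{-components}}|^2$-type expression bounded by $1/m$, cf.\ the antisymmetric-space bound) that $\mu$ attains $\max_j p_j/m = \l_1(W)$, again yielding $W_M\ge 0$. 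So in both borderline cases the mirrored operator degenerates to a positive semidefinite operator, proving necessity of the strict inequalities. The main work is the first paragraph's spanning argument; the decomposable cases reduce cleanly to the already-established structure theorems plus direct eigenvalue computations via Lemma~\ref{ept}.
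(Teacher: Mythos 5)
Your treatment of the two decomposable conditions is essentially the paper's own argument: invoke Theorem \ref{th:lambdaIN[-1/2,1)}(ii) to reduce $\tr(W^2)=1$ to $W=\proj{\psi}^\G$ with $\mu=s_1^2=\l_1(W)$, and Theorem \ref{th:lambdaIN[-1/2,1)}(vii) to reduce $\cN(W)=\frac{m-1}{2}$ to a convex combination of $\proj{\Phi_j}^\G$ with $\mu=\max_j p_j/m=\l_1(W)$; in both cases $W_M\ge 0$. That part is fine.

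The first condition, however, has a genuine gap, and you flag it yourself. Your plan is: $\l_{mn}(W)=-\frac{1}{2}$ implies $W$ is optimal (Lemma \ref{use}(ii)), and then optimality together with the spanning property should force $\mu=\l_1(W)$. That implication is false, not merely unproven. Optimality says only that no positive operator can be subtracted from $W$ while preserving block-positivity, and the spanning property says the product vectors annihilated by $W$ span $\bbC^m\otimes\bbC^n$; neither controls the \emph{maximum} of $\bra{a,b}W\ket{a,b}$ over product vectors relative to $\l_1(W)$. Indeed, mirrored pairs $(W,W_M)$ in which $W$ is optimal do exist (the paper cites a construction of a mirrored pair of NDEWs both of which are optimal), so no argument of the form ``optimal $\Rightarrow$ no mirrored EW'' can succeed. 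The route the paper actually takes is different and uses the attainability results, not optimality: by Theorem \ref{NDEW}(iv) the value $\l_{mn}=-\frac{1}{2}$ is \emph{not attainable} by NDEWs, so any $W$ with $\l_{mn}(W)=-\frac{1}{2}$ must be decomposable; then Theorem \ref{th:lambdaIN[-1/2,1)}(iv) identifies $W$ as $\proj{\psi}^\G$ for a pure state with Schmidt coefficients $(\frac{\sqrt2}{2},\frac{\sqrt2}{2},0,\dots,0)$, for which (\ref{maxdef}) gives $\mu=\frac{1}{2}=\l_1(W)$ and hence $W_M\ge 0$. Without this reduction to the explicit extremal form, your first paragraph does not close.
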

\begin{proof}
Recall from Lemma \ref{use} that $\l_{mn}(W)\ge -\frac{1}{2}$.
Assume that $\l_{mn}(W)=-\frac{1}{2}$. Following from Theorem \ref{NDEW} (iv), $W$  must be decomposable. Consequently, following Theorem \ref{th:lambdaIN[-1/2,1)} (iv), we obtain that $W$ must be the partial transpose of a pure entangled state with two nonzero Schmit coefficients $\frac{\sqrt{2}}{2}$. Thus, $W_M$ cannot be an EW. 

We next prove the second claim. For any normalized DEWs, 
Lemma \ref{use} (i) and Theorem \ref{th:lambdaIN[-1/2,1)} (vi) indicate that $\tr(W^2)\le 1$ and $\cN(W)\le \frac{1}{2}$.
Firstly, suppose $W$ satisfies $\tr(W^2)=1$. Then from Theorem \ref{th:lambdaIN[-1/2,1)} (ii), it must be the partial transpose of a pure entangled state, which implies that $W_M$ cannot be an EW. Next, suppose $W$ is an $m\times n$ DEW such that $\cN(W)=\frac{m-1}{2}$. Using Theorem \ref{th:lambdaIN[-1/2,1)} (vii), $W$ can be written as the convex combination $W=\sum_{i=1}^k p_i\proj{\Phi_i}^\G$ under local unitary equivalence. Without loss of generality, we assume that $\max\limits_i p_i=p_1$.
Given any unit vector $\ket{a,b}\in \bbC^m\otimes \bbC^n$, by the expression of $\ket{\Phi_i}$, we have 
\begin{eqnarray}
	\bra{a,b}W\ket{a,b}=\sum_{i=1}^k p_i \bra{a,b_i}(\proj{\Phi_i}^\G)\ket{a,b_i},
\end{eqnarray}
where $\ket{b_i}\in \bbC^n$ denotes the vector whose entries from $1+(i-1)m$ to $m+(i-1)m$ coincide with those of $\ket{b}$, and all other entries are zero. It then follows from (\ref{maxdef}) that
\begin{eqnarray}
	\bra{a,b}W\ket{a,b}\le \sum_{i=1}^k p_i \frac{1}{m}||\ket{a,b_i}||^2 \le \frac{p_1}{m} (\sum_{i=1}^k ||\ket{a,b_i}||^2)=\frac{p_1}{m},
\end{eqnarray}
where the upper bound can be achieved by choosing $\ket{a,b}=\ket{1,1}$. We have $W_M=(\frac{p_1}{m})I_{mn}-W$ and thus $W_M\ge 0$ cannot be an EW. This completes the proof.
\end{proof}
{\bf Remark.}
The above conditions are not sufficient. For example, consider the two-qubit normalized EW $W=\frac{1}{3}(\ket{11}+\ket{22})(\bra{11}+\bra{22})^\G+\frac{1}{3}\proj{11}$,
which satisfies that $\l_4(W)>-\frac{1}{2}$ and $\cN(W)<\frac{1}{2}$. Consequently, $\mu=\l_1(W)=\frac{2}{3}$. Therefore, $W_M=\frac{2}{3}I_4-W\ge 0$ is not an EW.

To conclude this section, we examine the detection capability of NDEWs.
Working with the Hilbert-Schmidt Euclidean measure, previous work \cite{szarek2008geometry} showed that as the dimension of the systems increases, the set of DEWs constitutes a vanishingly small subset among all
EWs. While every NPT state can be detected by DEWs, it is a natural question whether every $m\times n$ NPT state can be detected by NDEWs for $mn>6$. Our next theorem answers this question affirmatively. The complete proof appears in Appendix \ref{adem}.

\begin{theorem}
	\label{NPTNDEW}
	Any $m\times n$ entangled state with $mn>6$ can be detected by an NDEW.
\end{theorem}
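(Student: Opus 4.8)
The plan is to reduce the general case to a low-dimensional building block, then promote the resulting witness to a nondecomposable one. Let $\sigma$ be an $m\times n$ NPT state with $mn>6$; without loss of generality take $m\le n$, so either $m\ge 3$, or $m=2$ and $n\ge 4$. First I would produce \emph{some} EW $W_0$ detecting $\sigma$: since $\sigma$ is NPT, $\sigma^\G$ has a negative eigenvalue, and $W_0:=(\proj{\psi})^\G$ for a suitable eigenvector $\ket{\psi}$ of $\sigma^\G$ in the negative eigenspace satisfies $\tr(W_0\sigma)=\bra{\psi}\sigma^\G\ket{\psi}<0$; it is block-positive since $\bra{a,b}(\proj{\psi})^\G\ket{a,b}=|\braket{a^*,b}{\psi}|^2\ge 0$. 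Of course $W_0$ is decomposable, so the real content is the perturbation step. The idea is to pick a fixed NDEW $W_1$ on a $3\times 3$ (or $2\times 4$) subsystem — it exists precisely because $mn>6$ forces some $m'\times n'$ sub-block with $m'n'>6$, e.g.\ the Choi witness on a qutrit-qutrit corner, or a $2\times 4$ nondecomposable witness — embed it (padded by zeros, or by a small multiple of identity on the orthogonal complement), and form $W:=(1-t)W_0+tW_1$ for small $t>0$.

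The key steps, in order, are: (1) show $W$ remains block-positive for all $t\in[0,1]$ — immediate, since $\bp_{m,n}$ is convex and both $W_0$ and the embedded $W_1$ lie in it (the padding must be chosen so the embedding stays block-positive, which is why adding a small identity multiple on the complement, as in Lemma~\ref{le:ndew-edge}, is the safe choice); (2) show $W$ still detects $\sigma$ for $t$ small — since $\tr(W_0\sigma)<0$ is strict and $t\mapsto \tr(W\sigma)$ is continuous, $\tr(W\sigma)<0$ for all sufficiently small $t$; (3) show $W$ is nondecomposable for every $t>0$. Step (3) is the crux. Here I would argue by contradiction: if $W=P+Q^\G$ with $P,Q\ge 0$, then on the subsystem supporting $W_1$, taking the compression (conjugating by the projector onto $\bbC^{m'}\otimes\bbC^{n'}$, which commutes with partial transpose on that block) yields a decomposition of $(1-t)W_0|_{\text{corner}}+tW_1$ plus controlled cross terms; choosing the corner so that $W_0$ restricted there is itself decomposable (or handling its contribution explicitly) forces $tW_1$ to be decomposable up to a correction that vanishes as the cross terms are bounded — contradicting nondecomposability of $W_1$. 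A cleaner route: use that $W_1$ detects a specific PPT entangled state $\tau$ on the corner, extend $\tau$ to an $m\times n$ PPT entangled state $\tilde\tau$ (pad with zeros — it stays PPT and entangled), and check $\tr(W\tilde\tau)=(1-t)\tr(W_0\tilde\tau)+t\tr(W_1\tau)$; if $\tr(W_0\tilde\tau)\le 0$ we are done immediately, and if $\tr(W_0\tilde\tau)>0$ we instead use a different corner or absorb it by noting $\tr(W_1\tau)<0$ is strict so small enough $t$... actually this needs $\tr(W_0\tilde\tau)$ controlled, so I would pick $\ket{\psi}$ and the corner jointly so that $\cR(\tilde\tau)\perp\ket{\psi}$, giving $\tr(W_0\tilde\tau)=\tr((\proj{\psi})^\G\tilde\tau)=\bra{\psi}\tilde\tau^\G\ket{\psi}$, which I can force to be $\le 0$ or small by the freedom in choosing $\ket{\psi}$ within the (typically high-dimensional) negative eigenspace of $\sigma^\G$, intersected with a generic subspace.

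The main obstacle I anticipate is exactly this compatibility: detecting the \emph{given} NPT state $\sigma$ and simultaneously being nondecomposable (detecting some PPT entangled state) with a \emph{single} witness $W$. The two conditions pull in different directions, and the convex-combination trick only works if $\tr(W_0\cdot)$ does not swamp the small $t\,\tr(W_1\tau)$ term. I expect the resolution to hinge on a dimension count: the negative eigenspace of $\sigma^\G$, or the space of available "corners," is large enough (because $mn>6$ and we have the two-qubit/qubit-qutrit exclusions doing real work here, matching the hypothesis exactly) that one can choose the decomposable part $W_0$ and the nondecomposable part $W_1$ with disjoint or orthogonal supports, so that the cross terms $\tr(W_0\tilde\tau)$ and $\tr(W_1\sigma)$ both vanish, and then $W$ detects both $\sigma$ and $\tilde\tau$ for every $t\in(0,1)$, making $W$ a bona fide NDEW detecting $\sigma$.
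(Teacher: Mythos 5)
Your architecture is the right one and matches the paper's: form a conic/convex combination of a decomposable witness $W_0=\proj{\psi}^\G$ that detects the NPT state $\sigma$ and an NDEW $W_1$ that detects some PPT entangled state $\tilde\tau$, and arrange the cross term so that the combination detects both. But the step you flag as the crux is exactly where the proposal has a genuine gap, and your two proposed resolutions do not close it. First, the compression argument is a dead end: a nondecomposable witness plus a decomposable operator can perfectly well be decomposable (e.g.\ $W_1$ plus a large multiple of the identity is positive semidefinite), so compressing a putative decomposition of $W$ to the corner cannot force $tW_1$ to be nondecomposable-contradictory. Second, in the ``cleaner route'' note that $\tr(W_0\tilde\tau)=\bra{\psi}\tilde\tau^\G\ket{\psi}\ge 0$ automatically, since $W_0$ is decomposable and $\tilde\tau$ is PPT; so the case $\tr(W_0\tilde\tau)<0$ never occurs, and the only way the scheme can work is \emph{exact} vanishing, i.e.\ $\ket{\psi}\in\cK(\tilde\tau^\G)$. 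Your fallback — exploit freedom in choosing $\ket{\psi}$ inside a ``typically high-dimensional'' negative eigenspace of $\sigma^\G$, or take $W_0$ and $\tilde\tau$ with disjoint supports — fails in general: the negative eigenspace of $\sigma^\G$ can be one-dimensional, and if $\ket{\psi}$ has full Schmidt rank (e.g.\ $m=n$) then $\proj{\psi}^\G$ has trivial kernel by Lemma~\ref{ept}, so no support-disjointness is available.

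What the theorem actually requires, and what the paper supplies, is the converse freedom: fix $\ket{\psi}$ (after a local filtering $(A^T\otimes B)^\dg\rho(A^T\otimes B)$ one may assume $\ket{\psi}=\ket{\Psi_d}$ for each possible Schmidt rank $d$) and then \emph{exhibit} a PPT entangled state $\tilde\tau$ with $\tr(\proj{\Psi_d}^\G\tilde\tau)=0$. This is a nontrivial existence statement that must be verified case by case: the paper does it by explicit computation, conjugating the $2\times 4$ edge state $\rho_b$ of (\ref{rhob}) by $\diag(-1,1)\otimes I_4$ for $m=2$, and the two-qutrit state $\gamma$ of (\ref{ga3}) by two different pairs of local unitaries to cover $d=2$ and $d\ge 3$ when $m\ge 3$ (then padding with zeros). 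Without producing these witnesses-cum-states, the proof is incomplete; with them, your small-$t$ convex combination and the paper's large-$t$ conic combination are the same argument up to normalization.
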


This result is consistent with the above geometric measure of EWs, since most of EWs are NDEWs in higher-dimensional systems. A subsequent direction would be to identify specific subclasses of NDEWs capable of detecting all entangled states.

\section{Conclusions}
\label{sec:con}
We have provided a spectral analysis of entanglement witnesses, characterizing the suprema and infima of 
largest eigenvalue, smallest eigenvalue, negativity and squared Frobenius norm of a normalized EWs. For DEWs, the characterization is complete. Specifically, all infima and suprema are explicitly determined, and we have provided necessary and sufficient conditions for their attainability. 
As an application of our results, we have derived some necessary conditions for an EW to have a mirrored counterpart.   We have also demonstrated that any  entangled states beyond two-qubit and qutrit systems can be detected by NDEWs, implying a stronger detection capability of NDEWs.
However, due to their more complex structure, the corresponding spectral analysis for NDEWs remains less comprehensive, leaving several open questions. For a normalized $m\times n$ NDEW, these include determining: (a) the infima of $\l_1(W)$ and $\tr(W^2)$, (b) the supremum of $\cN(W)$ when $m<n$, and (c) the attainment of suprema of $\tr(W^2)$ and $\cN(W)$.  For a special case that $m=n$, we conjecture the suprema of $\cN(W)$ is not attaineable. Further investigation of Conjecture \ref{anti} is required to explore this question.

\section*{ACKNOWLEDGMENTS}
Authors were supported by the NNSF of China (Grant No. 12471427).

\appendix
\section{Proof of Theorem \ref{th:lambdaIN[-1/2,1)}}
\label{proof:pt1012}

Let $\ket{\Omega}={1\over\sqrt2}(\ket{12}-\ket{21})$ and  $\ket{\Psi_m}=\frac{1}{\sqrt{m}}\sum_{i=1}^m \ket{ii}\in \bbC^m\otimes \bbC^n$.
Define
\begin{eqnarray}
	\label{abcd}
	W_{a,b,c,d}:=
	a\frac{1}{mn-1}(I_{mn}-\proj{\Omega})+b\proj{\Psi_2}^\G
	+c \proj{11}+d \proj{\Psi_m}^\G,
\end{eqnarray}
where $a,b,c,d\in [0,1]$ and $a+b+c+d=1$. We have $W_{a,b,c,d}\in \bp_{m,n}$.

(i) A direct computation shows that the eigenvalues of $W_{a,b,0,0}$ are ${1-b\over mn-1}$ (multiplicity $mn-4$), ${1-b\over mn-1}+{b\over2}$ (multiplicity three)  and $-\frac{b}{2}$ (multiplicity one). We have $\l_{mn}(W_{a,b,0,0})=-\frac{b}{2}$, and thus $W_{a,b,0,0}$ is a legitimate DEW whenever $b\in (0,1]$. Further, one can verify by computing the derivative that $\tr W_{a,b,0,0}^2$ monotonically increases with $b$. This implies that $\tr (W^2)\in({1\over mn-1},1]$, and for every $x$ in this interval, such a $W$ exists. Combining with Lemma \ref{use} (i), the infimum of $\tr(W^2)$ is $\frac{1}{mn-1}$ and 
not attainable. The supremum of $\tr(W^2)$ is 1 and can be attained by the normalized DEW $W_{0,1,0,0}$.

(ii) The if part can be verified straightforwardly. We prove the only if part.
Suppose a normalized DEW $W$ satisfies $\tr(W^2)=1$. Write $W=x P+(1-x)Q^\G$, where $x\in [0,1)$, $P,Q\ge 0$ and $\tr(P)=\tr(Q^\G)=\tr(Q)=1$. We have $\tr(P^2)\le 1$ and $\tr(Q^2)=\tr(Q^\G Q^\G)\le 1$.
Consequently,
\begin{eqnarray}
	\label{eq:trW^2}
	\notag
	\tr W^2&&=x^2 \tr(P^2)+(1-x)^2\tr(Q^2)+2x(1-x)\tr(P Q^\G)\\
	\notag
	&&\le x^2+(1-x)^2+2x(1-x)\sqrt{\tr(P^2)}\sqrt{\tr(Q^2)}\\
	&&\le x^2+(1-x)^2+2x(1-x)\frac{\tr(P^2)+\tr(Q^2)}{2}\le 1,
\end{eqnarray}
where the first inequality follows from the Cauchy-Schwarz inequality. Hence all the inequalities are saturated only if $P$ is linearly dependent with $Q^\G$ and $\tr(Q^\G)=\tr(Q)=1$. This implies that $W$ is the partial transpose of a pure entangled state. 

(iii) Recall from (i) that $W_{a,b,0,0}$ is a legitimate normalized DEW for $b\in(0,1]$, with its smallest eigenvalue given by $\l_{mn}(W_{a,b,,0,0})=-\frac{b}{2}$. Hence for every $x\in [-\frac{1}{2},0)$, $W_{a,b,,0,0}$ exists such that $\l_{mn}(W_{a,b,,0,0})=x$.
Combining with Lemma \ref{use} (ii), the infimum of $\l_{mn}$ is $-\frac{1}{2}$ and attainable, the supremum of $\l_{mn}$ is $0$ and not attainable.

(iv) The if part can be verified straightforwardly. 
We prove the only if part. Suppose $W$ is a normalized DEW such that $\l_{mn}(W)=-\frac{1}{2}$. Using Lemma \ref{opdc} and Lemma \ref{use} (ii), $W$ can be written as $W=Q^\Gamma$, with unit-trace positive operator $Q$ whose range is a CES.  Considering the spectral decomposition $Q=\sum_{i} p_i\proj{\phi_i}$, we have
\begin{eqnarray}
	\label{efg}
	-\frac{1}{2}=\l_{mn}(Q^\Gamma)\ge \sum_{i} p_i\l_{mn}(\proj{\phi_i}^\Gamma)\ge -\frac{1}{2}(\sum_{i} p_i)=-\frac{1}{2},
\end{eqnarray}
where the first inequality follows from Lemma \ref{ineqe} (i) and the second follows from Lemma \ref{ept}. Hence, the two inequalities in (\ref{efg}) are both saturated. We obtain that $\l_{mn}(\proj{\phi_i}^\Gamma)=-\frac{1}{2}$, which implies the Schmidt coefficients of each $\ket{\phi_i}$ are $(\frac{\sqrt{2}}{2},\frac{\sqrt{2}}{2},0\cdots,0)$. Write $\ket{\phi_i}=\frac{\sqrt{2}}{2}(\ket{b_1^{(i)}}\otimes \ket{c_1^{(i)}}+\ket{b_2^{(i)}}\otimes \ket{c_2^{(i)}})$ as the Schmidt decomposition. 
Since the first inequality is saturated, using Lemma \ref{ineqe} (ii), we obtain that the one-dimensional eigenspace corresponding to the eigenvalue $-\frac{1}{2}$ of each $\proj{\phi_i}^\Gamma$ is identical. By Lemma \ref{ept}, this can be formulated as 
\begin{eqnarray}
	\label{span}
	\ket{{b_1^{(i)}}^*}\otimes \ket{{c_2^{(i)}}}-\ket{{b_2^{(i)}}^*}\otimes \ket{{c_1^{(i)}}}=t(\ket{{b_1^{(1)}}^*}\otimes \ket{{c_2^{(1)}}}-\ket{{b_2^{(1)}}^*}\otimes \ket{{c_1^{(1)}}})
\end{eqnarray}
for each $i\ge 2$ and some $t$ with $|t|=1$. From (\ref{span}), we obtain that
$\ket{b_1^{(i)}},\ket{b_2^{(i)}}\in \span\{\ket{b_1^{(1)}},\ket{b_2^{(1)}}\}$ and $\ket{c_1^{(i)}},\ket{c_2^{(i)}}\in \span\{\ket{c_1^{(1)}},\ket{c_2^{(1)}}\}$.
Hence, $\span\{\ket{\phi_1},\ket{\phi_2}\}$ is a subspace of $\span\{\ket{b_1^{(1)}},\ket{b_2^{(1)}}\}\otimes \span\{\ket{c_1^{(1)}},\ket{c_2^{(1)}}\}$, which is isomorphism to $\bbC^2\otimes \bbC^2$.
If $\text{dim}(\span\{\ket{\phi_1},\ket{\phi_2}\})=2$, then by Lemma \ref{pv} (i), it must contain a product vector. This contradicts the fact that the range of $Q$ is a CES. We conclude that $\ket{\phi_{2}}$ is linearly dependent with $\ket{\phi_{1}}$, and similarly, all $\ket{\phi_i}$ are linearly dependent with $\ket{\phi_{1}}$. So we have $W=Q^\G=\proj{\phi_1}^\Gamma$.
This proves the only if part. 

(v) Recall from (i) that $(W_{a,b,,0,0})$ is a legitimate DEW whenever $b\in (0,1]$. We have $\l_1(W_{a,b,,0,0})={1-b\over mn-1}+{b\over2}$, and it  monotonically increases with $b$. Hence for each $x\in (\frac{1}{mn-1},\frac{1}{2}]$, there exists $W_{a,b,,0,0}$ such that $\l_1(W_{a,b,,0,0})=x$. To cover the remaining interval $(\frac{1}{2},1)$, we consider the alternative construction
$W_{0,b,c,0}$ in (\ref{abcd}). For $b\in (0,1]$, it is a legitimate DEW, with $\l_1(W_{0,b,c,0})=1-\frac{b}{2}\in [\frac{1}{2},1)$. Combining these two cases, we conclude that the claim holds. By Lemma \ref{use} (iii), 
the infimum and supremum of $\l_1$ are  $\frac{1}{mn-1}$ and 1 respectively, and they are both unattainable.

(vi) The inequality $\cN(W)>0$ is by definition of EWs.
For arbitrary pure state $\ket{\phi}\in \bbC^m\otimes \bbC^n$ with Schmidt coefficients $a_1\ge\cdots\ge a_m\ge 0$, from Lemma \ref{ept}, we have
\begin{eqnarray}
	\label{inpr}
	\cN(\proj{\phi}^\Gamma)=\sum_{1\le i<j\le 1} |-a_ia_j|\le \sum_{1\le i<j\le 1} \frac{a_i^2+a_j^2}{2}={m-1\over2},
\end{eqnarray}
where the inequality is saturated if and only if all the Schmidt coefficients of $\ket{\phi}$ are identical, meaning $\ket{\phi}$ is the maximally entangled state. Write $W=P+Q^\G$ and let the spectral decomposition of $Q$ be $Q=\sum_{i} p_i\proj{\phi_i}$. We have
\begin{eqnarray}
	\label{nega}
	\cN (P+Q^\Gamma)\le \cN (Q^\Gamma)\le \sum_i p_i\cN (\proj{\phi_i}^\Gamma)\le \tr(Q^\Gamma)\cdot{m-1\over2}\le {m-1\over2}.
\end{eqnarray}
Here, the first inequality follows from Lemma \ref{ineqe} (ii), the second inequality follows from the convexity of negativity, and the third inequality follows from (\ref{inpr}). This proves the first claim. 
Consider $W_{0,0,c,d}$ in (\ref{abcd}). It is a legitimate DEW for $d\in (0,1]$ with $\cN(W_{0,0,c,d})=\frac{(m-1)d}{2}$. Thus, for each $x\in (0,\frac{m-1}{2}]$, $W$ exists such that $\cN(W)=x$. We conclude that the infimum of $\cN(W)$ is 0 and not attainable, the supremum of $\cN(W)$ is $\frac{m-1}{2}$ and attainable.

(vii) The if part can be verified by direct computation. We prove the only if part. 

We first prove the case that $m=n$. Suppose $W$ is an $m\times m$ normalized DEW such that $\cN(W)=\frac{m-1}{2}$. Then all the inequalities in (\ref{nega}) are saturated. This implies that $P=0$ and each $\ket{\phi_i}$ is a maximally entangled state. Thus, $\l(\proj{\psi_i}^\Gamma)$ are identical for each $i$, which are $\frac{1}{m}$ with multiplicity $\frac{m(m+1)}{2}$ and $-\frac{1}{m}$ with multiplicity $\frac{m(m-1)}{2}$. 
Assume that $W$ has $t$ negative eigenvalues. If $t<\frac{m(m-1)}{2}$, then from Lemma \ref{ineqe} (i), we have $\sum_{j=0}^{t-1}\l_{mn-j}(W)\ge \sum_{i=1}^k p_i (\sum_{j=0}^{t-1}\l_{mn-j}(\proj{\phi_i}^\Gamma))=-\frac{t}{m}>\frac{(1-m)}{2}$. This contradicts with $\cN(W)=\frac{m-1}{2}$.
Similarly, we obtain that $t$ cannot be greater than $\frac{m(m-1)}{2}$. So $W$ has exactly $\frac{m(m-1)}{2}$ negative eigenvalues. By applying Lemma \ref{ineqe} (i) again,  $\l_{\frac{m(m+1)}{2}+1}(W)=\cdots=\l_{mn}(W)=-\frac{1}{m}$ and 
consequently, $\l_1(W)=\cdots=\l_{\frac{m(m+1)}{2}}(W)=\frac{1}{m}$.
Hence $\l(W)=\sum_{i=1}^k p_i\l(\proj{\phi_i}^\Gamma)=\l(\proj{\phi_i}^\Gamma)$. It follows that all $\proj{\phi_i}^\Gamma$ share common eigenvectors and thus commute with each other. We conclude that they are linearly dependent. Hence $W=Q^\G=\proj{\phi_1}^\G$ and the claim holds.

We next prove the general case $m<n$.
Let $W$ be a normalized DEW such that $\cN(W)=\frac{m-1}{2}$. Similarly, we obtain that $W=\sum_{i}  q_i(\proj{\psi_i})^\G$, where each $\ket{\psi_i}$ is a maximally entangled state. Under local unitary equivalence, we can assume  $\ket{\psi_1}=\ket{\Phi_1}$.
Suppose $W$ has $s$ negative eigenvalues with corresponding unit eigenvectors $\ket{\a_1},\cdots,\ket{\a_s}$. We have 
\begin{eqnarray}
	\cN(W)=-\sum_i q_i(\sum_{j=1}^s \bra{\a_j}(\proj{\psi_i}^\G) \ket{\a_j})\le \frac{m-1}{2}.
\end{eqnarray}
The saturation of the inequality implies that 
\begin{eqnarray}
	\label{A8}
\sum_{j=1}^s \bra{\a_j} (\proj{\psi_i}^\G)\ket{\a_j}=\frac{m-1}{2}
\end{eqnarray}
holds for each $i$. Focusing on $i=1$, (\ref{A8}) implies $\frac{m(m-1)}{2}$ eigenvectors among $\ket{a_i}$ correspond to the $\frac{m(m-1)}{2}$ negative eigenvalues of $\ket{\Phi_1}$, which implies $s\ge\frac{m(m-1)}{2}$. Without loss of generality, let these be $\ket{\a_1},\cdots,\ket{\a_{\frac{m(m-1)}{2}}}$. If $s=\frac{m(m-1)}{2}$, then (\ref{A8}) implies that all $\ket{\psi_i}$ belong to $\bbC^m\otimes \span\{\ket{1},\cdots,\ket{m}\}$ and $W$ can be viewed as an $m\times m$ DEW. By the previous claim, all $\ket{\psi_i}$ are linearly dependent with $\ket{\psi_1}$ and the claim holds. 

On the other hand, if $s>\frac{m(m-1)}{2}$, then by applying (\ref{A8}) again, we have 
\begin{eqnarray}
	\bra{\a_j}(\proj{\Phi_1}^\G)\ket{\a_j}=0
\end{eqnarray}
holds for any $\frac{m(m-1)}{2}+1\le j\le s$. It follows from the expression of $\ket{\Phi_1}$ that $\ket{\a_{\frac{m(m-1)}{2}+1}},\cdots,\ket{\a_s}\in \bbC^m \otimes \span\{\ket{m+1},\cdots,\ket{n}\}$. 
Consequently, Lemma \ref{ept} ensures that $\ket{\psi_i}$ must lie in $\bbC^m \otimes \span\{\ket{m+1},\cdots,\ket{n}\}$ for $i\ge 2$. Under local equivalence, we can set $\ket{\psi_2}=\ket{\Phi_2}$, while keeping $\ket{\psi_1}$ unchanged. Repeating this argument inductively, we can write $\ket{\psi_i}=\ket{\Phi_i}$ for any $i$ under local unitary uquivalence. Therefore, $W$ can be written as the convex combination of $\proj{\Phi_1}^\G,\cdots,\proj{\Phi_k}^\G$. The upper bound $k\le \lfloor \frac{n}{m} \rfloor$ arises because $\span\{\ket{\Phi_1},\cdots,\ket{\Phi_k}\}\subseteq \bbC^m\otimes \bbC^n$. 
This completes the proof.

(viii) Lemma \ref{use} (iv.b) indicates that $\sum_{i=3}^{2n} \l_i$ is lower bounded by $-\frac{1}{2+2\sqrt{2}}$. The attainment of the bound can be directly verified by using Lemma \ref{ept}. Similarly, it follows from Lemma \ref{use} (iv.c) that $\sum_{i=k}^{2n} \l_{i}$ is lower bounded by $-\frac{1}{2}$ for any $4\le k\le 2n$. The attainment of this bound can also be directly verified.

(ix) Let $U$ be the order-$mn$ unitary matrix such that $UWU^\dg=\diag(\l_{mn},\cdots,\l_1)$. Recalling from Lemma \ref{twoap} that $\r_1\in \app_{m,n}$ and thus $U^\dg \r U$ is PPT, since $W$ is decomposable, we have 
\begin{eqnarray}
	\label{genhao2}
\tr(UWU^\dg\cdot\r_1)=\tr(W\cdot U^\dg\r_1U)=\sqrt{2}(\l_{mn}+\l_{mn-1})+1\ge 0,
\end{eqnarray}
which proves the claim. Similarly, the second claim can be proved by using Lemma \ref{twoap} that $\r_2\in \app_{m,n}$. The attainments of the infima 
can be directly verified.

\section{Proof of Theorem \ref{NDEW}}
\label{ndd}
This section contains the proof of Theorem \ref{NDEW}. We first propose a necessary lemma.

\begin{lemma}
	\label{o2}
	(i)	Let $A_1 = \begin{pmatrix} \frac{\sqrt{2}}{2} & 0 \\ 0 & \frac{\sqrt{2}}{2} \end{pmatrix}$ and $B_1 = \begin{pmatrix} 0 & x \\ y & 0 \end{pmatrix}$, where $|x|,|y| \leq \frac{\sqrt{2}}{2}$ are not all zero. Then there exist $p_1, p_2\in \mathbb{C} \setminus \{0\}$ satisfying $|p_1|^2 + |p_2|^2 = 1$ such that the maximal singular value of the matrix $C_1 = p_1 A_1 + p_2 B_1$ is strictly greater than $\frac{\sqrt{2}}{2}$.
	
	(ii) Let $A_2=\begin{pmatrix} \frac{\sqrt{2}}{2} & 0 \\ 0 & 0 \end{pmatrix}$ and $B_2=\begin{pmatrix} 0 & \frac{\sqrt{2}}{2} \\ z & 0 \end{pmatrix}$, where $|z| \leq \frac{\sqrt{2}}{2}$ is not zero. Then there exist $p_1, p_2\in \mathbb{C} \setminus \{0\}$ satisfying $|p_1|^2 + |p_2|^2 = 1$ such that the maximal singular value of the matrix $C_2= p_1 A_2 + p_2 B_2$ is strictly greater than $\frac{\sqrt{2}}{2}$.
\end{lemma}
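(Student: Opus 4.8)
The plan is to treat the two parts separately; in each case I would reduce the claim to a lower bound on $\lambda_1(C_i^\dagger C_i)$, using that the maximal singular value satisfies $\norm{C_i}_\infty^2=\lambda_1(C_i^\dagger C_i)$ and that $C_i^\dagger C_i$ is a $2\times 2$ Hermitian matrix, whose largest eigenvalue is $\tfrac{a+b}{2}+\sqrt{\bigl(\tfrac{a-b}{2}\bigr)^2+|c|^2}$ once one knows its diagonal entries $a,b$ and its off-diagonal entry $c$. So each part comes down to choosing $p_1,p_2$ (both nonzero, with $|p_1|^2+|p_2|^2=1$) and reading off these three numbers.

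For part (ii) no perturbation is needed: the maximizing input vector of $A_2$ is $\ket{1}$ while that of $B_2$ is $\ket{2}$, but both operators send their maximizing vector to a positive multiple of $\ket{1}$ ($A_2\ket{1}=\tfrac{\sqrt2}{2}\ket{1}$ and $B_2\ket{2}=\tfrac{\sqrt2}{2}\ket{1}$). I would therefore take $p_1=p_2=\tfrac1{\sqrt2}$, so that $C_2=\tfrac1{\sqrt2}(A_2+B_2)$, and evaluate $C_2$ on the unit vector $w=\tfrac1{\sqrt2}(\ket{1}+\ket{2})$: a one-line computation gives $\bigl\lVert C_2 w\bigr\rVert^2=\tfrac12+\tfrac{|z|^2}{4}>\tfrac12$ because $z\ne0$, hence $\norm{C_2}_\infty\ge\lVert C_2 w\rVert>\tfrac{\sqrt2}{2}$.

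For part (i), $A_1=\tfrac{\sqrt2}{2}I_2$ is scalar and $B_1$ may be an arbitrarily small (or, when $xy=0$, nilpotent) perturbation, so a fixed equal-weight mixture will not work and I would use a perturbative choice: $p_1=\sqrt{1-\varepsilon^2}$ (real) and $p_2=\varepsilon\nu$ with $|\nu|=1$ and $\varepsilon\in(0,1)$ small. Then $C_1^\dagger C_1$ has diagonal entries $\tfrac12-\varepsilon^2(\tfrac12-|y|^2)$ and $\tfrac12-\varepsilon^2(\tfrac12-|x|^2)$, and off-diagonal entry $\tfrac{\sqrt2}{2}\varepsilon\sqrt{1-\varepsilon^2}\,(\nu x+\bar\nu\bar y)$. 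In the eigenvalue formula the average term $\tfrac{a+b}{2}$ equals $\tfrac12$ up to a nonnegative $O(\varepsilon^2)$ deficit (here is where $|x|^2,|y|^2\le\tfrac12$ enters), while the square-root term is at least $|c|=\tfrac1{\sqrt2}\varepsilon\sqrt{1-\varepsilon^2}\,|\nu x+\bar\nu\bar y|$, which is of order $\varepsilon$. Thus, provided $\nu$ can be chosen so that $\nu x+\bar\nu\bar y\ne0$, for all sufficiently small $\varepsilon>0$ we get $\lambda_1(C_1^\dagger C_1)>\tfrac12$, i.e. $\norm{C_1}_\infty>\tfrac{\sqrt2}{2}$, with $p_1,p_2\ne0$ as required. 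That such a $\nu$ exists follows by inspection: $\nu=1$ gives $x+\bar y$ and $\nu=i$ gives $i(x-\bar y)$, and these cannot both vanish unless $x=y=0$, which is excluded by hypothesis.

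The only genuine subtlety, concentrated entirely in part (i), is uniformity over all admissible $x,y$: arbitrary moduli in $[0,\tfrac{\sqrt2}{2}]$, including the degenerate cases $|x|=|y|$ and $xy=0$. This is precisely what forces the perturbative ($\varepsilon\to0$) choice rather than a fixed mixture, and it is handled cleanly by retaining only the first-order ($O(\varepsilon)$) off-diagonal contribution — which always dominates the $O(\varepsilon^2)$ diagonal deficit — together with the single phase degree of freedom in $p_2$, which is used to arrange $\nu x+\bar\nu\bar y\ne0$. Part (ii) carries no analogous difficulty, and the explicit $p_1=p_2=\tfrac1{\sqrt2}$ suffices.
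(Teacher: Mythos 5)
Your proof is correct. For part (i) it follows essentially the same route as the paper: take $|p_2|^2$ small, use the single relative-phase degree of freedom to make the off-diagonal entry of $C_1^\dagger C_1$ nonzero at first order in $|p_2|$, and observe that this $O(\varepsilon)$ contribution to $\lambda_1(C_1^\dagger C_1)$ dominates the $O(\varepsilon^2)$ deficit in the diagonal average (which is where $|x|^2,|y|^2\le\tfrac12$ enters); the paper phrases the phase step as achieving $|p_1p_2^*y^*+p_1^*p_2x|=|p_1||p_2|(|x|+|y|)$, while you only argue that some unit phase $\nu$ makes $\nu x+\bar\nu\bar y\ne 0$, which is all that is needed. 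For part (ii) the paper merely says the argument is ``similar'' and perturbative; your treatment is genuinely simpler and non-perturbative --- the fixed choice $p_1=p_2=\tfrac{1}{\sqrt2}$ together with the explicit test vector $\tfrac{1}{\sqrt2}(\ket{1}+\ket{2})$ gives $\lVert C_2w\rVert^2=\tfrac12+\tfrac{|z|^2}{4}>\tfrac12$ in one line, avoiding any eigenvalue computation. Both arguments are valid; yours for (ii) is arguably the cleaner one to record.
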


\begin{proof}
	(i)	By computation,	we have
	\begin{eqnarray}
		\lambda_{1}(C_1^\dg C_1) = \frac{u + v + \sqrt{(u - v)^2 + 4 |z|^2}}{2},
	\end{eqnarray}
	where
	\begin{eqnarray}
		u: &&= \frac{|p_1|^2}{2} + |p_2|^2 |y|^2, \\
		v: &&= \frac{|p_1|^2}{2} + |p_2|^2 |x|^2,\\
		z: &&= \frac{\sqrt{2}}{2}(p_1p_2^*y^*+p_1^*p_2x).
	\end{eqnarray}
	Let $t:= |p_2|^2$. By choosing appropriate phases for $p_1$ and $p_2$, we can maximize $|z|$ such that 
	$|z| = \sqrt{t(1 - t)} \cdot \frac{\sqrt{2}}{2} (|x| + |y|)$.
	To have $\sigma_{\max}(C_1) > \frac{\sqrt{2}}{2}$, we require 
	\begin{eqnarray}
		u + v + \sqrt{(u - v)^2 + 4 |z|^2}>1,
	\end{eqnarray}
	by substituting the expressions, we have
	\begin{eqnarray}
		\label{newi}
		t(|x|^2 + |y|^2 - 1) + \sqrt{t^2(|x|^2 - |y|^2)^2 + 2 t(1 - t)(|x|+|y|)^2} > 0.
	\end{eqnarray}
	For sufficiently small $t > 0$, one can verify that (\ref{newi}) holds as 
	the square root term dominates. We conclude that the claim holds.
	
	(ii) The proof is similar to that of (i) through directly analyzing the maximal eigenvalue of $C_2^\dg C_2$.
\end{proof}

\vspace{0.3cm}

{\bf Proof of Theorem \ref{NDEW}}

(i) Firstly, Lemma \ref{use} (i) implies that $\tr(W^2)\le 1$.  Using Lemma \ref{use2}, there exists a normalized NDEW $W$ such that $\l(W)$ can be arbitrarily close to $(1,0,\cdots,0)$ (corresponds to a pure state). Therefore $\tr(W^2)$ can be arbitrarily close to 1. This implies that the supremum of $\tr(W^2)$ is 1.

(ii)  Let $t$ be the infimum of $\l_1$ for normalized $m\times n$ NDEWs. Suppose $t$ is attained by an NDEW $W$. By Lemma \ref{use} (iii), we have $t>\frac{1}{mn-1}$. For any PPT entangled state $\rho$ detected by $W$, the perturbed witness $W':=\frac{1}{1+\epsilon mn}(W+\epsilon I_{mn})$, with sufficiently small $\epsilon > 0$ remains a normalized NDEW detecting $\rho$. Moreover, a direct computation implies that $\tr(W'^2)<t$, yielding a contradiction. Thus, the infimum cannot be attained.

(iii)
Following Lemma \ref{use2}, there exists an NDEW $W$ whose eigenvalue vector can be arbitrarily close to $(1,0,\cdots,0)$, which implies that $\l_{mn}(W)$ can be arbitrarily close to 0. Combining with Lemma \ref{use} (ii) that $\l_{mn}(W)<0$, we conclude that the supremum of $\l_{mn}(W)$ is 0 and unattainable.

(iv) Following Lemma \ref{use} (ii) that $\l_{mn}\ge -\frac{1}{2}$, it suffices to prove the existence of normalized NDEWs with smallest eigenvalues approaching $-\frac{1}{2}$. For the qubit-qudit system, consider the PPT edge state 
\begin{eqnarray}
	\label{rhob}
	\r_b=\frac{1}{7b+1}\bma b&0&0&0&0&b&0&0\\0&b&0&0&0&0&b&0\\0&0&b&0&0&0&0&b\\0&0&0&b&0&0&0&0\\0&0&0&0&\frac{1+b}{2}&0&0&\frac{\sqrt{1-b^2}}{2}\\b&0&0&0&0&b&0&0\\0&b&0&0&0&0&b&0\\0&0&b&0&\frac{\sqrt{1-b^2}}{2}&0&0&\frac{1+b}{2}\ema,
\end{eqnarray}
where $b\in (0,1)$ \cite{horodecki1997separability}. 
Let $P$ and $Q$ be the projectors onto the kernel of $\r_b$ and $\r_b^\G$, respectively. Given any $z>0$, define $\epsilon_z:=\min\limits_{\ket{e,f},\norm{e}=\norm{f}=1}
\bra{e,f}(zP+Q^\G)\ket{e,f}>0$, where the positivity of $\epsilon_z$ follows from the edge state properties \cite{lewenstein2001characterization}. Construct 
\begin{eqnarray}
	W_z:=\frac{1}{z\tr(P)+\tr(Q^\Gamma)-8\delta}(zP+Q^\G-\delta I_8),
\end{eqnarray}
where $0<\delta<\epsilon_z$. We have $W_z$ is a normalized NDEW as it can detect the entanglement of $\r_b$. Next, a direct computation yields $\lim\limits_{b\rightarrow1} \l_{8}(\frac{1}{\tr(Q^\G)}Q^\G)=-\frac{1}{2}$. As $b\to 1$ and $z,\delta \to 0$, Lemma \ref{ineqe} (iii) shows $\l_{8}(W_z)$ approaches $-\frac{1}{2}$.

For the two-qutrit system, we analogously consider the PPT edge state 
\begin{eqnarray}
	\label{rhoa}
	\r_a=\frac{1}{8a+1}\bma
	a & 0 & 0 & 0 & a & 0 & 0 & 0 & a \\
	0 & a & 0 & 0 & 0 & 0 & 0 & 0 & 0 \\
	0 & 0 & a & 0 & 0 & 0 & 0 & 0 & 0 \\
	0 & 0 & 0 & a & 0 & 0 & 0 & 0 & 0 \\
	a & 0 & 0 & 0 & a & 0 & 0 & 0 & a \\
	0 & 0 & 0 & 0 & 0 & a & 0 & 0 & 0 \\
	0 & 0 & 0 & 0 & 0 & 0 & \frac{a+1}{2} & 0 & \frac{\sqrt{1-a^2}}{2} \\
	0 & 0 & 0 & 0 & 0 & 0 & 0 & a & 0 \\
	a & 0 & 0 & 0 & a & 0 & \frac{\sqrt{1-a^2}}{2} & 0 & \frac{a+1}{2} \\
	\ema,
\end{eqnarray}
where $a\in (0,1)$ \cite{horodecki1997separability}.
Let $P'$ and $Q'$ be the projectors onto the kernel of $\r_a$ and $\r_a^\G$. A direct computation also yields $\lim\limits_{a\rightarrow1} \l_{9}(\frac{1}{\tr(Q'^\G)}Q'^\G)=-\frac{1}{2}$. 
Following a similar construction to the qubit-qudit case, we can build a normalized NDEW $W'_z$ with $\l_9(W'_z)$ approaching $-\frac{1}{2}$ in the appropriate limit. This result naturally extends to arbitrary higher-dimensional systems through zero-padding of $W_z$ or $W'_z$.
This proves that the infimum of $\l_{mn}$ is $-\frac{1}{2}$.

Finally, we prove that the infimum is not attainable. Assume $W$ is an $m\times n$ normalized NDEW such that $\l_{mn}(W)=-\frac{1}{2}$, with the coresponding eigenvector being $\ket{r}$.  Up to local unitary equivalence, we can assume $\ket{r}=\sum^k_{j=1}r_j\ket{jj}$, where $r_j>0$ and $k\le \min\{m,n\}$. Firstly, suppose $k=2$. We project $W$ onto the subspace spanned by $\{\ket{1},\ket{2}\}\otimes \{\ket{1},\ket{2}\}$ and obtain a two-qubit EW $W'$. One can verify that the smallest eigenvalue of $W'$ is $-\frac{1}{2}$, with the same eigenvector $\ket{r}$. Since $\tr(W')\le \tr(W)=1$,
by Lemma \ref{use} (ii), we have $\tr(W')=1$. Consequently, all the diagonal entries except those of $W'$ are zero. From Lemma \ref{proj}, this implies that all rows and columns beyond $W'$ are zero. We conclude that $W$ is indeed decomposable, which leads to a contradiction.

On the other hand, suppose $k\ge 3$. We have $r_ir_j<\frac{1}{2}$ for $i\neq j$. Let $\r':=\sum\limits_{i\neq j} r_ir_j\proj{ij}$. Since $W\ket{r}=-\frac{1}{2}\ket{r}$, we have
\begin{eqnarray}
	\label{dd1}
	\tr(W\cdot(\proj{r}+\frac{1}{2}I_{mn}))
	=\tr(W\cdot(\proj{r}+\r'))+\tr(W\cdot(\frac{1}{2}I_{mn}-\r'))=0.
\end{eqnarray}
According to the results from Appendix B of \cite{Vidal1999Robustness}, we know that the unnormalized state $\proj{r}+\r'$ is separable. Hence (\ref{dd1}) implies that $\tr(W(\frac{1}{2}I_{mn}-\r'))=0$.
The condition $r_ir_j<\frac{1}{2}$ implies that $\frac{1}{2}I_{mn}-\r'$ is a strictly positive diagonal matrix. Hence all the diagonal entries of $W$ are zero, implying $W$ is a zero matrix. This is a contradiction.
We conclude that the infimum cannot be attained.

(v) Using Lemma \ref{use2}, there exists NDEW $W$ whose eigenvalue vector  can be arbitrarily close to $(1,0,\cdots,0)$, which implies that $\l_{1}(W)$ can be arbitrarily close to 1. Combining with Lemma \ref{use} (iii) that $\l_1<1$, we conclude that the supremum of $\l_1(W)$ is 1 and is not attainable.

(vi) Let $t$ be the infimum of $\l_1$ for normalized $m\times n$ NDEWs. Suppose $t$ is attained by some $W$. By Lemma \ref{use} (iii), we have $t>\frac{1}{mn-1}$. For any PPT entangled state $\rho$ detected by $W$, the perturbed witness $W':=\frac{1}{1+\epsilon mn}(W+\epsilon I_{mn})$, with sufficiently small $\epsilon > 0$ remains a normalized NDEW detecting $\rho$. Moreover, it satisfies that
$\l_1(W')=t+\frac{(1-t mn)\epsilon}{1+mn\epsilon}<t$, yielding a contradiction. Thus, the infimum cannot be attained.

(vii) Let $W$ be a normalized $2\times n$ NDEW $W$ with $l$ negative eigenvalues. From Lemma \ref{pv} (ii), we have $1\le l\le n-1$. It then follows from Lemma \ref{use} (iv) that $\sum_{i=2n-l+1}^{2n}\l_i(W)\ge -\frac{1}{2}$. This implies that the negativity of $W$ is not greater than $\frac{1}{2}$.
Meanwhile, result (iv) demonstrates that $\l_{2n}$ can be arbitrarily close to $-\frac{1}{2}$. This proves the claim for $m=2$.

For $m=n\ge 3$, we have known from Lemma \ref{use} (v) that $\cN(W)\le \frac{m-1}{2}$. It then suffices to prove that there exists a normalized NDEW whose negativity can be arbitrarily close to $\frac{m-1}{2}$.
Consider the following two-qutrit PPT entangled state  \cite{clarisse2006construction1}
\begin{eqnarray}
	\label{ga3}
	\g=\frac{1}{13}\bma 1&0&0&0&0&0&0&0&-1\\ 0&2&0&-1&0&0&0&0&0\\0&0&1&0&0&0&1&0&0\\0&-1&0&1&0&0&0&0&1\\0&0&0&0&1&0&1&0&0\\0&0&0&0&0&1&0&-1&0\\0&0&1&0&1&0&2&0&0\\0&0&0&0&0&-1&0&1&0\\-1&0&0&1&0&0&0&0&3\ema.
\end{eqnarray}
Let $U=\bma -1&0&0\\0&-1&0\\0&0&1\ema$, $V=\bma 0&0&1\\0&1&0\\1&0&0\ema$.
The state $\g':=((U\otimes V)\g(U\otimes V)^\dg)^\G$ is also a PPT entangled state. Moreover, a direct computation yields $\tr(\proj{\Psi_m}^\G\cdot\g')=0$, where $\ket{\Psi_m}:=\frac{1}{\sqrt{m}}\sum_{i=1}^m \ket{ii}\in \bbC^m\otimes \bbC^m$ is the maximally entangled state. 
Let $W_{\g'}$ be a $m\times m$ normalized NDEW detecting $\g'$. Define $W_{t,\g'}:=\frac{1}{1+t}(t\proj{\Psi_m}^\G+W_{\g'})$  where $t>0$. We have $W_{t,\g'}$ is still an NDEW that detects $\g'$. Taking $t\to \infty$ and applying Lemma \ref{ineqe}(iii), we have $\l(W_{t,\g'})$ converges to  $\l(\proj{\Psi_m}^\G)$, whose negativity is $\frac{m-1}{2}$. This can be generalized to arbitrary $m\times m$ systems by just appending zero rows and columns on $W_{t,\g'}$.
We conclude that the claim holds.

(viii) Let $W$ be a $3\times 3$ normalized EW such that $\cN(W)=\frac{m-1}{2}=1$. We shall show that $W$ must be decomposable.
Firstly, by Lemma \ref{use} (v), six eigenvalues of $W$ are $\frac{1}{3}$ and three are $-\frac{1}{3}$. We write 
\begin{eqnarray}
	W=\frac{1}{3}(I_9-2P):=\frac{1}{3}(I_9-2(\proj{a}+\proj{b}+\proj{c})),
\end{eqnarray}
where $P$ is the rank-three projection onto the negative eigenspace. Up to a unitary transformation of the basis, we may assume $\ket{a}$ has Schmidt rank two. Moreover, by the block-positivity of $W$, we have $\bra{x,y}P\ket{x,y}\le \frac{1}{2}$ for any unit vectors $\ket{x},\ket{y}\in \bbC^3$. This implies that the maximal Schmidt coefficient of any unit vector in the range of $P$ is at most $\frac{\sqrt{2}}{2}$.
Hence, $\ket{a}$ has Schmidt coefficeints $(\frac{\sqrt{2}}{2},\frac{\sqrt{2}}{2},0)$. Under local unitary equivalence, we may take
\begin{eqnarray}
	\label{a}
	\ket{a}=\frac{\sqrt{2}}{2}(\ket{11}+\ket{22}).
\end{eqnarray}
Let $A:=\bma\frac{\sqrt{2}}{2}&&0&&0\\0&&\frac{\sqrt{2}}{2}&&0\\0&&0&&0\ema$,
$B:=[b_{i,j}]$ and $C:=[c_{i,j}]$ be the coefficient matrices of $\ket{a},\ket{b}$ and $\ket{c}$ with respect to the standard basis. The above condition implies that the largest singular value of
\begin{eqnarray}
	\label{X}
	X:=p_1A +p_2 B+p_3 C
\end{eqnarray}
is at most $\frac{\sqrt{2}}{2}$ for any $|p_1|^2+|p_2|^2+|p_3|^2=1$.
In particular, this implies that $|b_{i,j}|,|c_{i,j}|\le \frac{\sqrt{2}}{2}$ for all $i,j$. On the other hand, since $\bra{11}W\ket{11}=-\frac{2}{3}|b_{1,1}|^2-\frac{2}{3}|c_{1,1}|^2\ge 0$, we have 
$b_{1,1}=c_{1,1}=0,
$
and similarly,
$
b_{2,2}=c_{2,2}=0.
$
Now, assume at least one of $b_{1,2}$ and $b_{2,1}$ is nonzero.
By using Lemma \ref{o2} (i), there exist nonzero $p_1,p_2$ (with $p_3=0$) such that the maximal singular value of the top-left order-two matrix of $X$ is greater than $\frac{\sqrt{2}}{2}$. This implies that the maximal singular value of $X$ also exceeds $\frac{\sqrt{2}}{2}$, which leads to a contradiction. So we have 
$
b_{1,2}=b_{2,1}=0,
$
and similarly, 
$
c_{1,2}=c_{2,1}=0.
$
Hence, both $B$ and $C$ have rank two. A direct computation gives the two nonzero singular values of $B$ are
\begin{eqnarray}
	\frac{\sqrt{2}}{2}\pm \sqrt{1-4(|b_{1,3}|^2+|b_{2,3}|^2)(|b_{3,1}|^2+|b_{3,2}|^2)},
\end{eqnarray}
which implies 
\begin{eqnarray}
	(|b_{1,3}|^2+|b_{2,3}|^2)(|b_{3,1}|^2+|b_{3,2}|^2)=\frac{1}{4}.
\end{eqnarray}
Note that $(|b_{1,3}|^2+|b_{2,3}|^2)(|b_{3,1}|^2+|b_{3,2}|^2)\le \frac{1}{4}(|b_{1,3}|^2+|b_{2,3}|^2+|b_{3,1}|^2+|b_{3,2}|^2)\le \frac{1}{4}$, equality holds only if
\begin{eqnarray}
	\label{bbcc}
	b_{3,3}=0,|b_{1,3}|^2+|b_{2,3}|^2=|b_{3,1}|^2+|b_{3,2}|^2=\frac{1}{2}.
\end{eqnarray}
Similarly, we have
\begin{eqnarray}
	\label{bbccc}
	c_{3,3}=0,|c_{1,3}|^2+|c_{2,3}|^2=|c_{3,1}|^2+|c_{3,2}|^2=\frac{1}{2}.
\end{eqnarray}
By applying a unitary transformation on $\ket{b}$ and $\ket{c}$, we may set 
$
b_{1,3}=0.
$
It follows from (\ref{bbcc}) that $|b_{2,3}|=\frac{\sqrt{2}}{2}$. Up to a 
global phase, we may assume
$
b_{2,3}=\frac{\sqrt{2}}{2}.
$
Assume $b_{3,2}\neq 0$, then by Lemma \ref{o2} (ii), there exist nonzero $p_1,p_2$ (with $p_3=0$) such that the lower-right order-two matrix of $X$ in (\ref{X}) is greater than $\frac{\sqrt{2}}{2}$, again a contradiction. So we have
$b_{3,2}=0$ and thus $|b_{3,1}|=\frac{\sqrt{2}}{2}$. By applying another local unitary transformation without changing $\ket{a}$ and $b_{2,3}$, we can further eliminate the phase of $b_{3,1}$ such that 
\begin{eqnarray}
	\label{b}
	\ket{b}=\frac{\sqrt{2}}{2}(\ket{23}+\ket{31}).
\end{eqnarray}
It then follows from the block-positivity of $W$ that $c_{2,3}=c_{3,1}=0$. So $|c_{1,3}|=|c_{3,2}|=\frac{\sqrt{2}}{2}$ holds by (\ref{bbccc}). Up to a global phase of $\ket{c}$, assume that $c_{1,3}=\frac{\sqrt{2}}{2}$ and $c_{3,2}=e^{i\theta}\frac{\sqrt{2}}{2}$. We claim that $e^{i\theta}=-1$. To see this, consider $p_1=p_2=p_3=\frac{\sqrt{3}}{3}$ for $X$ in (\ref{X}). The characteristic polynomial of $X^\dg X$ is
\begin{eqnarray}
	\label{lll}
	f(\l):=\det(\l I_3-X^\dg X)=\frac{1}{108}(108 \l^3-108 \l^2+27\l-1-\cos\theta),
\end{eqnarray}
where $f(\frac{1}{2})=-(1+\cos \theta)\le 0$ and $f(\l)>0$ for sufficiently large $\l$.
If $-(1+\cos \theta)<0$, then there is a root of $f(\l)=0$ that is greater than $\frac{1}{2}$, which implies that at least one singular value of $X$ is greater than $\frac{\sqrt{2}}{2}$. This is a contradiction.  We conclude that $\cos\theta=-1$ and therefore 
\begin{eqnarray}
	\label{c}
	\ket{c}=\frac{\sqrt{2}}{2}(\ket{13}-\ket{32}).
\end{eqnarray}

Finally, 
one can verify that from (\ref{a}), (\ref{b}) and (\ref{c}) that
$W^\G=\frac{1}{3}(I_9-2(\proj{a}^\G+\proj{b}^\G+\proj{c}^\G))\ge 0$, which would lead to a contradiction if $W$ is nondecomposable. We conclude that the supremum of $\cN(W)$ is not attainable by normalized NDEWs for $m=n=3$.

(ix)  Using Lemma \ref{use2}, there exists NDEW $W$ whose eigenvalue vector  can be arbitrarily close to $(1,0,\cdots,0)$, which implies that $\cN(W)$ can be arbitrarily close to 0. On the other hand, $\cN(W)>0$ by the definitions of EWs. 
We conclude that the infimum of $\cN(W)$ is 0 and not attainable.
$\hfill\square$

\section{Proof of Theorem \ref{NPTNDEW}}
\label{adem}
By definition, any PPT entangled state can be detected by an NDEW. It suffices to prove that any NPT state can also be detected by an NDEW. 
Let
$\r$ be an $m\times n$ ($m\le n$) NPT entangled state, and $\ket{\psi}$ be the eigenvector corresponding to the smallest eigenvalue of $\r^\G$. Suppose $\ket{\psi}$ has Schmidt rank $d$. We have $2\le d\le m$.
There exist invertible matrices $A$ and $B$ such that $(A\otimes B)\ket{\Psi_d}=\ket{\psi}$, where $\ket{\Psi_d}:=\frac{1}{\sqrt{d}}\sum_{i=1}^d \ket{ii}\in \bbC^m\otimes \bbC^n$.
Let $\r':=(A^T\otimes B)^\dg\r (A^T\otimes B)$.
So $\r'$ is also an $m\times n$ NPT entangled state, and it satisfies that
\begin{eqnarray}
	\label{nd1}
	\tr(\proj{\Psi_d}^\G\cdot\r')
	=\tr(\proj{\psi}^\G\cdot\r)<0.
\end{eqnarray}
Under local equivalence, it then suffices to prove that $\r'$ can be detected by an NDEW.

We begin with $m=2$. In this case, $d$ is exactly two. 
Consider the state $\sigma:=(\diag(-1,1)\otimes I_4)\r_b^\G(\diag(-1,1)\otimes I_4)$, where $\r_b$ in (\ref{rhob}) is a $2\times 4$ PPT entangled state. Hence $\sigma$ is also a $2\times 4$ PPT entangled state, and by appending zero rows and columns, a $2\times n$ PPT entangled state. So there exists a $2\times n$ NDEW 
$W_\sigma$ detecting it. On the other hand, a direct computation yields $\tr(\sigma\cdot \proj{\Psi_2}^\G)=0$. For any $t>0$, define $W_{t,\sigma}:=t\proj{\Psi_2}^\G+W_\sigma$. We obtain that $W_{t,\sigma}$ is an NDEW since it remains block-positive and detects $\sigma$. Further, 
\begin{eqnarray}
	\label{nd2}
	\tr(W_{t,\sigma}\cdot \r')=t\tr(\proj{\Psi_2}^\G\cdot\r')+\tr(W_\sigma\cdot \r').
\end{eqnarray}
Let $t$ be large enough, according to (\ref{nd1}), the rhs of (\ref{nd2}) becomes negative. This implies that $\r'$ can be detected by a $W_{t,\sigma}$.

We next consider $m\ge 3$. Recall from (\ref{ga3}) that the state $\gamma$ is a  two-qutrit PPT entangled state.
Let $U_1=\bma 0&1&0\\0&0&1\\1&0&0\ema$, $U_2=\bma -1&0&0\\0&-1&0\\0&0&1\ema$ and $V=\bma 0&0&1\\0&1&0\\1&0&0\ema$. Let $\g_1:=((U_1\otimes V)\g(U_1\otimes V)^\dg)^\G$ and $\g_2:=((U_2\otimes V)\g(U_2\otimes V)^\dg)^\G$. One can verify that $\g_1$ and $\g_2$ are both $m\times n$ PPT entangled states
by appending zero rows and columns. Further, direct computations yield 
\begin{eqnarray}
	\label{hmy1}
&&\tr(\proj{\Psi_2}^\G\cdot\g_1)=0,\\
\label{hmy2}
&&\tr(\proj{\Psi_d}^\G\cdot\g_2)=0
\end{eqnarray}
for any $d\ge 3$. Let $W_{\g_1}$  and $W_{\g_2}$ be the NDEWs detecting $\g_1$ and $\g_2$, respectively. For any $t>0$, define $W_{t,\g_1}:=t\proj{\Psi_2}^\G+W_{\g_1}$ and $W_{t,\g_2}:=t\proj{\Psi_d}^\G+W_{\g_2}$. 
According to (\ref{hmy1}) and (\ref{hmy2}), $W_{t,\g_1}$ and $W_{t,\g_2}$ are both NDEWs that detect $\g_1$ and $\g_2$ respectively. If the Schmidt rank $d$ is two, then (\ref{nd1}) implies that $\r'$ can be detected by a $W_{t,\g_1}$ by letting $t$ be large enough. Similarly, if the Schmidt rank $d$ is no smaller than three, then $\r'$ can be detected by an NDEW $W_{t,\g_2}$.
We conclude that the claim holds.
$\hfill\square$

\bibliographystyle{unsrt}
\bibliography{Spectralcharacterizationsofew}

\end{document}